\newtheorem{theorem}{Theorem}
\newtheorem{lemma}[theorem]{Lemma}
\journal{arXiV}
\begin{document}

\begin{frontmatter}

\title{On the geometric structure of fMRI searchlight-based information maps}

\author{Shivakumar Viswanathan \fnref{label1}}

\author{Matthew Cieslak}

\author{Scott T. Grafton}

\fntext[label1]{Corresponding author: shiva@psych.ucsb.edu}
\address{Department of Psychological and Brain Sciences, \\
University of California, Santa Barbara, California, USA}

\begin{abstract}
Information mapping is a popular application of Multivoxel Pattern Analysis (MVPA) to fMRI. Information maps are constructed using the so called searchlight method, where the spherical multivoxel neighborhood of every voxel (i.e., a searchlight) in the brain is evaluated for the presence of task-relevant response patterns. Despite their widespread use, information maps present several challenges for interpretation. One such challenge has to do with inferring the size and shape of a multivoxel pattern from its signature on the information map. To address this issue, we formally examined the geometric basis of this mapping relationship. Based on geometric considerations, we show how and why small patterns (i.e., having smaller spatial extents) can produce a larger signature on the information map as compared to large patterns, independent of the size of the searchlight radius. Furthermore, we show that the number of informative searchlights over the brain increase as a function of searchlight radius, even in the complete absence of \emph{any} multivariate response patterns. These properties are unrelated to the statistical capabilities of the pattern-analysis algorithms used but are obligatory geometric properties arising from using the searchlight procedure.
\end{abstract}

\begin{keyword}
 MVPA \sep searchlight \sep FMRI \sep pattern-classification
\end{keyword}
\end{frontmatter}


\pagebreak
\section{Introduction}

In FMRI, a functional map is an important representation of how cognitive function is related to neuroanatomy. Such maps provide a topographic representation of the brain regions that are (and are not) systematically responsive to differing values of a cognitive variable. The size, shape, number and location of the ``blobs" (i.e., voxel-clusters meeting some statistical relevance criterion) on the functional maps are the basis for inferences about the neural substrates of the cognitive process. Given the importance of functional maps, there is a continuing need to scrutinize the sensitivity, precision and technical assumptions of the mapping procedure itself. The topic of the current technical note is the mapping procedure used to generate the widely used \emph{information maps} \citep{Kriegeskorte:2006bf}. 

The motivation for information mapping is the statistical concern that a region's responses to the cognitive variable under study might take a complex multivariate form. For example, a group of multiple voxels might conjointly respond in a task-relevant manner even though individual voxels may not detectably do so  \citep{Haxby:2001kl,Cox:2003vc,Haynes:2006vv,Norman:2006gy,Mur:2008cn,Tong2010,Wagner2010,Formisano:2012fk,Serences:2012fk}. Such distributed response patterns might be effectively undetectable with conventional univariate statistical tests restricted to individual voxel responses, but detectable with an explicit multivariate test for multivoxel response patterns, i.e., using some Multivoxel Pattern Analysis (MVPA) technique. To address this concern in the context of functional mapping,  \citet{Kriegeskorte:2006bf} proposed a simple procedure to enable sophisticated MVPA methods to be readily applied to detect and map brain regions that contain information about the experimental conditions, irrespective of whether the informative responses are univariate or multivariate. 

In the proposed procedure, the unit of evaluation is not the single voxel but a ``searchlight" -- the \emph{group} of voxels contained in a spherical neighborhood of radius $r$ around a single voxel. The searchlight statistic is a measure of whether the conjoint responses of this group of voxels contain information about the experimental conditions being tested. Based on these abstractions, an information map is generated as follows: the searchlight statistic is evaluated for searchlights centered at \emph{every} voxel in the brain; and the statistic's value for each searchlight is mapped to the central voxel of that searchlight. The resulting topographic representation generated by the searchlight-procedure has been referred to as the information map. Such searchlight-based information maps are now routinely reported in studies that employ MVPA methods \cite[for example,][]{Haynes:2007dy,Soon:2008bl,Johnson:2009kx,Poldrack:2009fm,Chadwick:2010uq,Oosterhof:2010jq,Nestor:2011pf,Alink:2011sf,Golomb:2011ph,Peelen:2011jt,Stokes:2011ly,Woolgar:2011kx,Morgan:2011ys,Oosterhof:2012vb,Connolly:2012gl,Kaplan:2012gd}. 

Notwithstanding their popularity, interpreting a searchlight-based information map presents a variety of challenges \citep{Kriegeskorte:2006bf,Poldrack:2009fm,Pereira:2011bl,Jimura:2012cs}. One such challenge is posed by the topographic ambiguity of the information map. Recall that a searchlight statistic computed on the responses of an entire multivoxel searchlight is mapped to a single voxel on the information map, namely, that searchlight's central voxel. This mapping protocol is applied to searchlights across the brain irrespective of the number or the spatial locations of the information-carrying voxels \emph{within} each searchlight. Consequently, the spatial position of an informative voxel on the information maps is a coarse index to the actual location of the informative ``pattern" within that voxel's searchlight. Furthermore, since a searchlight has a unique central voxel $v$, an informative voxel on the information map is not indicative of the actual number of voxels constituting the informative pattern within that voxel's searchlight neighborhood. Given these properties of the information map, we asked: what, if anything, can be reliably inferred about the size and shape of a multivoxel pattern from its corresponding signature on the information map?

Previous studies have treated this question as a qualitative concern requiring cautious interpretation. Nonetheless, here we show that information maps are in fact subject to several crisply quantitative geometric constraints that strongly govern how such maps can be interpreted.

Our analytical results are based on a simple geometric intuition. Since a multivoxel searchlight is defined at every voxel across the brain, searchlights centered at different voxels systematically overlap each other, i.e., have voxels in common. Using overlapping searchlights is crucial to obtain a continuous topographic coverage especially when the locations and spatial extents of voxel-neighborhoods that are task-responsive are unknown a priori. We observed that due to these overlaps, multiple searchlights would be deemed informative merely by virtue of sharing the \underline{same} task-relevant multivoxel response patterns. Thus we reasoned that the size and shape of a multivoxel group $G$'s signature on the information map should be defined by exactly those voxels which have searchlight-neighborhoods that contain $G$. Using this observation and simple geometric reasoning, we formally deduce some key properties of the relationship between an informative pattern and its corresponding signature on the information map. 

Based on our formal analysis, we prove here that, for any searchlight radius, a single task-responsive voxel produces a larger signature on the information map as compared to a distributed multivoxel response pattern. Furthermore, the number of informative searchlights over the brain can increase as a function of searchlight radius, without necessarily revealing any new information and even in the complete absence of \emph{any} multivariate response patterns. Importantly, these properties are largely independent of the type of machine-learning algorithm or the testing protocol used to compute the searchlight statistic.

\section{Model \label{sec:overlap}}

\subsection{Definition: The searchlight decomposition}

The basis of the searchlight analysis is the geometric structure of the voxel-space in which the brain images are defined. The voxel-space $\mathcal{V}$ is defined here as the set of all voxels $V$ augmented with a geometric structure defining the relative spatial position of the voxels in $V$, and a distance measure between these voxels. For analytical convenience, we treat the voxel-space as being \emph{uniform} and \emph{connected} as described below.

A $d$-dimensional voxel-space $\mathcal{V}$ is deemed to be \emph{uniform} if every voxel has a neighboring voxel in all $d$ principal directions. Additionally, we assume that the voxel-space $\mathcal{V}$ is \emph{connected}. Specifically, there is a path connecting every pair of voxels $v_i$ and $v_j$ in $\mathcal{V}$ with a path defined here to be an ordered sequence of voxels $\langle v_i, \cdots, v_{k}, v_{k+1},....v_j \rangle$ where voxel $v_{k+1}$ is a neighbor of $v_{k}$ along one of the $d$ principal directions. These simplifying assumptions are intended to emphasize the general geometric principles entailed by the searchlight method while deliberately ignoring the special cases associated with (i) the boundaries of $\mathcal{V}$ where a searchlight may be truncated; and  (ii) distinctions between gray-matter and white-matter voxels and any masking of the latter from the searchlights. Although we refer to searchlights as being volumes in a voxel-space having dimensionality $d=3$, the properties derived here are agnostic to the specific value of $d$ and apply to surfaces ($d=2$) where the searchlights are discs \cite[as in,][]{Oosterhof:2011ad,Chen:2011vl}.

The key abstraction defined by the searchlight method is a decomposition of $\mathcal{V}$ into subsets of voxels based on a geometric criterion. Given a voxel space $\mathcal{V}$, we define a searchlight voxel-decomposition using the following indexing function 
\begin{equation}
S: \mathcal{V} \times \mathbb{R} \rightarrow \mathcal{P}(V)
\end{equation}
where $\mathcal{P}(\mathcal{V})$ is the powerset of $\mathcal{V}$, namely, the set of all subsets of $\mathcal{V}$. This indexing function $S$ takes two inputs -- the identity of a voxel $v$ in the voxel-space $\mathcal{V}$, and a real-value $r \in \mathbb{R}$ specifying the searchlight's radius. The searchlight indexing function uses these parameters in conjunction with the geometric structure of $\mathcal{V}$ to extract and output a set of voxels $S(r,v) \in \mathcal{P}(V)$. A voxel $v' \in \mathcal{V}$ is a member of $S(r, v)$ if and only if the distance between $v'$ and $v$ is less than or equal to $r$. For convenience, we henceforth write $S_r(v)$ to denote the searchlight $S(r,v)$. The resulting \emph{searchlight voxel-decomposition} of $\mathcal{V}$ for a given radius $r$ is defined as 
\begin{equation}
\mathcal{S}_r(\mathcal{V}) = \{S_r(v) \; | \; \text{for all} \; v \in \mathcal{V} \}
\label{eq:decomp}
\end{equation}

For clarity, we restrict our usage of the term ``searchlight" to the cases when the value of the radius of a searchlight $r$ is such that  each $S_r(V)$ is a \emph{multivoxel} entity that is not identical with $\mathcal{V}$, that is, $1 < |S_r(v)| \ll |\mathcal{V}|$,  for any $S_r(v) \in \mathcal{S}_r(\mathcal{V})$. We refer to the univariate case where $|S_r(v)|=1$ as the univoxel decomposition.

\begin{figure*}[!tb]
\begin{center}
\includegraphics[height=2.5in]{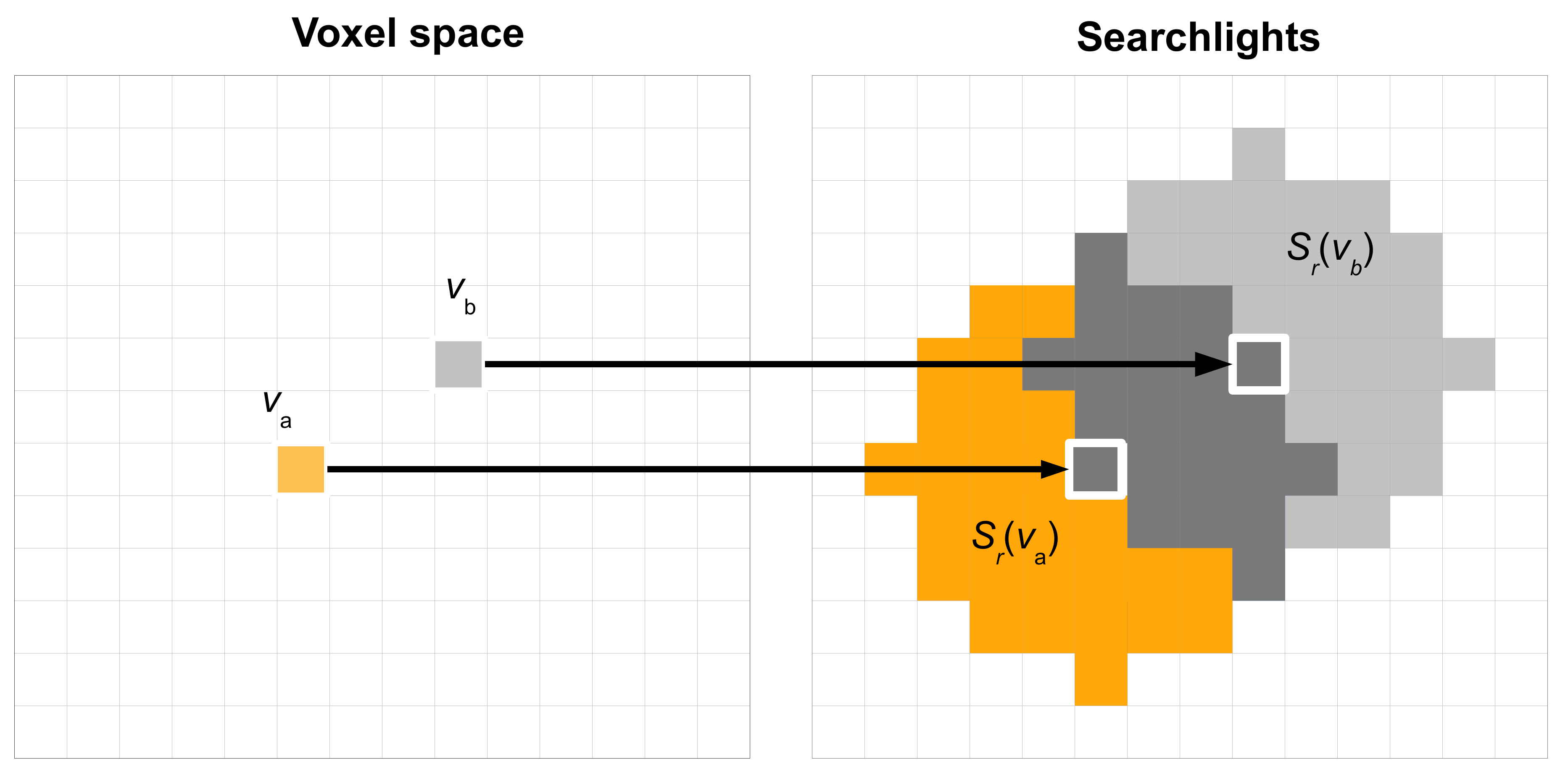}
\caption{{\bf Cartoon of searchlight indexing scheme}: Left panel shows two example voxels $v_a$ and $v_b$ that are mapped to searchlights $S_r(v_a)$ (orange) and $S_r(v_b)$ (light gray) respectively, having some radius $r$, as shown in the right panel. Two searchlights can overlap to varying degrees depending on the distance between their center voxels and the radius. Here the overlap is indicated in dark-gray. The searchlight statistic computed on each searchlight is mapped back to the corresponding central voxels to generate the information map (see text).}
\label{fig:voxelspace}
\end{center}
\end{figure*}

A schematic of the searchlight indexing scheme is shown in Figure \ref{fig:voxelspace}.

\subsection{Definition: Informativeness function}

The searchlight statistic is a measure of whether the voxels in the searchlight, as a unit, exhibit differences in their conjoint responses to the experimental conditions. More generally, it is a measure of whether the searchlight contains information about the experimental condition, i.e., whether the searchlight is informative. As with the radius of the searchlight, the specific statistical procedure used to compute the searchlight statistic is a discretionary choice made by the researcher \citep[for example, see][]{Pereira:2011bl}.

To describe the searchlight statistic in a procedure-independent manner, we use a binary indicator function, which we refer to as the \emph{informativeness} function, $I: \mathcal{P}(V) \rightarrow \{0, 1\}$. Given a subset of voxels $G \in \mathcal{P}(V)$, the function $I$ returns a value of $1$ if the responses of $G$ are deemed to be informative; or $0$ if they are not, based on some appropriately specified statistical criterion. 

Evaluating the informativeness function on the responses corresponding to each searchlight $S_r(v)$ in 
$\mathcal{S}_r(\mathcal{V})$ defines the overall information set for a particular radius
\begin{equation}
\mathbf{I}(\mathcal{S}_r(\mathcal{V})) = \{I(S_r(v)) \; | \; \text{for all} \; S_r(v) \; \text{in} \; \mathcal{S}_r(\mathcal{V}) \} 
\end{equation}

The \emph{information map} is the object obtained when the information set defined above is augmented with the geometric structure of the voxel-space $\mathcal{V}$ by mapping the informativeness value of each searchlight 
$I (S_r(v))$ to its corresponding central voxel $v$. 

The performance measure of interest here is the total number of informative searchlights for a particular searchlight decomposition
\begin{equation}
F_r = \sum_{i = 1}^{|\mathcal{V}|} I(S_r(v_i))
\label{eq:totalnum}
\end{equation}

\subsection{Linking assumptions}

Two simple properties link the structure of the searchlight decomposition $\mathcal{S}_r(\mathcal{V})$ to the structure of the information set $\mathbf{I}(\mathcal{S}_r(\mathcal{V}))$. 

The first property is that, by virtue of the regularity of their shape and relative positioning, searchlights in 
$\mathcal{S}_r(\mathcal{V})$ can overlap. Consequently, the same voxels in $\mathcal{V}$ can be included or sampled by multiple searchlights. The second property is that since a sphere is an arbitrarily chosen and regular shape, it is unlikely that every voxel is necessarily task-relevant in every informative searchlight. Consequently, the informativeness of the responses in some particular searchlight volume $S_r(v)$ can be alternatively and accurately interpreted as indicating that \emph{some} group of voxels in that searchlight volume exhibits task-dependent responses. These two properties can be combined as follows. Let $G$ be a group of task-relevant voxels in a searchlight $S_r(v)$, where $G \subseteq S_r(v)$. Since searchlights share voxels, if some other searchlight $S_r(v')$ also contains $G$, that is $G \subset S_r(v')$, then it implies that $S_r(v')$ should also contain task-relevant information as it includes the task-relevant voxels $G$. 

Based on this observation, we make two linking assumptions about the behavior of the procedures used to compute the searchlight statistic and hence the informativeness function $I$. The first is that we restrict the focus of our analysis to the common multivariate procedure that does \underline{not} include geometric information about the relative spatial positions of the voxels in a searchlight while computing that searchlight's informativeness. The second is the \emph{Superset informativeness} (SIN) assumption which postulates that:

\textbf{Superset informativeness assumption}: \emph{If a group of voxels $G$  is informative then every searchlight that contains $G$ is also informative.}

That is, according to the SIN assumption, if $I(G) = 1$ then $I(S_r(v))=1$ for all $v \in V$ where $|G|>0$ and $G \subseteq S_r(v)$. Unless otherwise stated, we will overload the symbol $I$ to denote an informativeness function that explicitly satisfies these two model requirements. 

Although the SIN assumption is based is on a sound deduction, the empirical requirement that it poses may not necessarily be satisfied in practice. Specifically, even if it is known that $I(G)=1$, the statistical procedure used to evaluate informativeness might fail to detect that a searchlight $S_r(v)$ is informative even if $G \subset S_r(v)$. Such a Type II error (i.e., failing to reject a false null hypothesis that $H_0: I(S_r(v)) = 0$) might occur for any of a variety of reasons, for example, the use of an inappropriate machine-learning algorithm \citep{Pereira:2011bl}, insufficient power due to a limited number of samples, and so on. In this regard, the SIN assumption treats the multivoxel pattern analysis techniques as being more sensitive and reliable than might actually be the case in practice.  That is, the SIN assumption allows us to establish the information map's properties in the best-case independent of the performance idiosyncrasies of the specific multivariate method being used.

\section{Analytical results}\label{sec:analytical}

Our focus of the current section is to establish how the structure of the sampling bias arises from the searchlight decomposition. We first prove that due to the geometric regularities of a searchlight decomposition, single-voxels and multivoxel groups are sampled with different frequencies, i.e., included in a different number of searchlights. Specifically, single voxels are included in more searchlights than multivoxel groups. This sampling difference is independent of the searchlight radius. We then extend these results to prove that the frequency with which voxel-groups are sampled increases with the radius of the searchlights, irrespective of the number of voxels in the group.  Finally, we prove that the information map mirrors these sampling  biases in an optimistic manner, i.e., in a manner that is not necessarily warranted by the data.

\subsection{Single-voxels and multivoxel-groups are sampled with different frequencies}\label{sec:sampling}

The regularity in the shape of the searchlights and their relative positions the voxel-space define a systematic relationship between each voxel $v \in \mathcal{V}$ and the searchlights in $\mathcal{S}_r(\mathcal{V})$ that contain that voxel $v$. Firstly, if a voxel $v_a$ is a member of the searchlight $S_r(v_b)$, then by symmetry, the voxel $v_b$ is a member of the searchlight $S_r(v_a)$ (Lemma \ref{prop:symmetry}). Secondly, two distinct voxels $v_a$ and $v_b$ are not simultaneously included in every searchlight that contains either of these voxels (Lemma \ref{prop:twovoxel}).

\vspace{10pt}
\begin{lemma}\label{prop:symmetry}
If a voxel $v_b$ is a member of $S_r(v_a)$ then the voxel $v_a$ is a member of $S_r(v_b)$, where $v_a, v_b \in \mathcal{V}$ and $S_r(v_a), S_r(v_b) \in \mathcal{S}_r(\mathcal{V}$.
\end{lemma}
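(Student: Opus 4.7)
The plan is to reduce the claim to the symmetry of the underlying distance function on the voxel-space $\mathcal{V}$. Recall from the definition of the searchlight indexing function that $v' \in S_r(v)$ holds if and only if the distance between $v'$ and $v$ does not exceed $r$. So the lemma is really an assertion that the binary relation ``lies within radius $r$ of'' is symmetric, and this will follow immediately once we know that the distance measure on $\mathcal{V}$ is symmetric.

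First I would state explicitly what membership in a searchlight unpacks to: writing $d$ for the distance measure attached to the voxel-space $\mathcal{V}$, the hypothesis $v_b \in S_r(v_a)$ is by definition equivalent to $d(v_a, v_b) \leq r$. Next I would invoke the symmetry of $d$, namely $d(v_a, v_b) = d(v_b, v_a)$, which is a standing assumption on the metric structure of $\mathcal{V}$ (and is implicitly in force wherever the paper speaks of ``the distance between'' two voxels without distinguishing their order). Combining these two facts gives $d(v_b, v_a) \leq r$, and reapplying the defining condition of $S_r(v_b)$ in the reverse direction yields $v_a \in S_r(v_b)$, which is the conclusion.

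The only subtlety is making sure the symmetry of the distance measure on $\mathcal{V}$ is actually available. Since the paper treats $\mathcal{V}$ as a uniform, connected voxel-space with voxels situated in $d$-dimensional space and searchlights defined as spheres of radius $r$, the induced distance is just the Euclidean (or otherwise standard) metric between voxel centers, which is symmetric. So no genuine obstacle arises; the entire argument is a one-line chain of definitional rewrites bracketing a single appeal to $d(v_a, v_b) = d(v_b, v_a)$. If one wanted to be pedantic, the main place to be careful would be to confirm that the inequality defining searchlight membership is non-strict on both sides (the paper uses ``less than or equal to''), so the boundary case $d(v_a, v_b) = r$ is handled identically in both directions.
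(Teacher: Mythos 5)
Your proposal is correct and follows essentially the same route as the paper's own proof: unpack searchlight membership as the distance condition $d(v_a,v_b)\leq r$, appeal to the symmetry of the distance, and repack to conclude $v_a \in S_r(v_b)$. The only cosmetic difference is that you make the symmetry of the distance measure explicit (the paper leaves it implicit), while the paper additionally notes that a searchlight centered at $v_b$ exists because the decomposition is defined at every voxel; both are sound.
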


\begin{proof} 
Consider a searchlight $S_r(v_a)$ centered at voxel $v_a$. Since a searchlight is defined at every voxel in $\mathcal{V}$ (Equation \ref{eq:decomp}), it follows that there is a searchlight defined at every voxel in $S_r(v_a)$. By definition, a voxel $v_b \in \mathcal{V}$ is a member of $S_r(v_a)$ if and only if the distance between $v_a$ and $v_b$ is less than or equal to the radius $r$. Since there is a searchlight $S_r(v_b)$ centered at $v_b \in S_r(v_a)$, and the distance between $v_a$ and $v_b$ is less than or equal to $r$, it follows that $v_a$ is a member of searchlight $S_r(v_b)$. Therefore, if $v_b$ is a member of $S_r(v_a)$ then $v_a$ is a member of $S_r(v_b)$. 
\end{proof}
     
 \vspace{10pt}    
\begin{lemma} \label{prop:twovoxel} For any two non-identical voxels $v_a$ and $v_b$, where $v_a, v_b \in \mathcal{V}$ and $S_r(v_a) \neq S_r(v_b)$, there necessarily exists a searchlight that contains $v_a$ but not $v_b$ and a different searchlight that contains $v_b$ but not $v_a$. 
\end{lemma}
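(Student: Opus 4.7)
The plan is to reduce the problem to a simple set-theoretic observation by using Lemma \ref{prop:symmetry} twice, once to translate between membership in searchlights and once in its contrapositive form.

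First, I would invoke the uniformity of the voxel-space $\mathcal{V}$ (and the explicit convention that boundary truncation is being ignored) to conclude that every radius-$r$ searchlight contains the same number of voxels; in particular $|S_r(v_a)| = |S_r(v_b)|$. Two finite sets of equal cardinality are equal if and only if one is contained in the other, so the hypothesis $S_r(v_a) \neq S_r(v_b)$ forces \emph{both} set differences $S_r(v_a) \setminus S_r(v_b)$ and $S_r(v_b) \setminus S_r(v_a)$ to be nonempty. This symmetry is what allows me to obtain both conclusions of the lemma in parallel.

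Next, I would pick any witness voxel $v^* \in S_r(v_a) \setminus S_r(v_b)$ and argue that $S_r(v^*)$ is the desired searchlight containing $v_a$ but not $v_b$. From $v^* \in S_r(v_a)$, Lemma \ref{prop:symmetry} immediately gives $v_a \in S_r(v^*)$. From $v^* \notin S_r(v_b)$, the contrapositive of Lemma \ref{prop:symmetry} (if $v^* \in S_r(v_b)$ then $v_b \in S_r(v^*)$) gives $v_b \notin S_r(v^*)$. Thus $S_r(v^*)$ has exactly the required property. Running the identical argument with any $v^{**} \in S_r(v_b) \setminus S_r(v_a)$ produces a searchlight $S_r(v^{**})$ containing $v_b$ but not $v_a$.

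The main obstacle is the cardinality step that upgrades ``the two searchlights differ'' to ``each searchlight contains a voxel not in the other.'' Without it one would only know that \emph{at least one} of the two set differences is nonempty, which would yield only one of the two searchlights claimed by the lemma. Once this symmetric non-containment is established, the rest of the proof is a direct, mechanical application of Lemma \ref{prop:symmetry} and its contrapositive, with no further geometric or metric reasoning required.
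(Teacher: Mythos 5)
Your proof is correct, but it takes a genuinely different route from the paper's. The paper splits into two cases by whether the distance between $v_a$ and $v_b$ exceeds $2r$, and in the nontrivial case argues by contradiction: assuming both voxels lie in every searchlight that contains either one, it applies Lemma \ref{prop:symmetry} to derive $S_r(v_a) \subseteq S_r(v_b)$ and $S_r(v_b) \subseteq S_r(v_a)$, contradicting $S_r(v_a) \neq S_r(v_b)$. You argue directly: uniformity plus the no-boundary-truncation convention gives $|S_r(v_a)| = |S_r(v_b)|$, so inequality of the two sets forces \emph{both} set differences to be nonempty, and each witness $v^*$ is converted into the required searchlight $S_r(v^*)$ by Lemma \ref{prop:symmetry} and its contrapositive. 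Your version avoids the case split and the contradiction, and it delivers the two searchlights symmetrically; indeed, as you observe, strictly negating the paper's assumption only guarantees that at least one of the two containments fails, i.e., only one of the two claimed searchlights, so your equal-cardinality step closes a small logical gap in the published argument. The price is reliance on the canonical-$N_r$ convention (every searchlight has the same number of voxels), which the paper only states explicitly after Theorem \ref{cor:corsym}; under the stated uniformity and no-truncation assumptions this is legitimate, but it is an extra ingredient the paper's own proof does not use. One cosmetic slip: the instance of Lemma \ref{prop:symmetry} whose contrapositive yields $v_b \notin S_r(v^*)$ from $v^* \notin S_r(v_b)$ is ``if $v_b \in S_r(v^*)$ then $v^* \in S_r(v_b)$''; the parenthetical you wrote is the converse instance. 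Since the lemma holds for every ordered pair of voxels, the conclusion still follows and the proof stands.
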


\begin{proof} 
This claim can be proved in two steps based on the distance between $v_a$ and $v_b$.

First consider the case where the distance between $v_a$ and $v_b$ is greater than $2r$, that is, the diameter of a searchlight. By definition, a searchlight contains voxels that have a distance less than or equal to $r$ from that searchlight's central voxel. Due to the spherical shape of the searchlight, the maximum distance between any two voxels in a searchlight is equal to $2r$. If the distance between $v_a$ and $v_b$ is greater than $2r$, there does not exist any searchlight of radius $r$ that contains both $v_a$ and $v_b$ as members. Thus, it follows that there exists some searchlight that contains $v_a$ but not $v_b$; and some other searchlight that contains $v_b$ but not $v_a$.

Now consider the second case where the distance between $v_a$ and $v_b$ is less than or equal to $2r$. Since the distance between these two voxels is less than the maximum distance between some two voxels in a searchlight, in a uniform voxel-space there necessarily exists some searchlight $S_r(v)$ that contains both $v_a$ and $v_b$ as members. Contrary to the proposition, let us assume that both $v_a$ and $v_b$ are contained in every searchlight that contains either $v_a$ or $v_b$. That is, if a searchlight $S_r(v)$ contains $v_a$, then it necessarily contains $v_b$, and vice versa.  Recall that, from Lemma 1, a voxel $v_a$ is contained in every searchlight $S_r(v)$ where $v \in S_r(v_a)$. Now, based on the contradictory assumption, it implies that $v_b$ is also contained in every such searchlight $S_r(v)$ where $v \in S_r(v_a)$. By the same reasoning, $v_a$ should be contained in every searchlight $S_r(v)$ where $v \in S_r(v_b)$. If these conditions hold true, then it implies that every voxel in $S_r(v_a)$ is also contained in $S_r(v_b)$; and every voxel in $S_r(v_b)$ is also contained in $S_r(v_a)$. If this the case, then the searchlights $S(v_a)$ and $S(v_b)$ are identical as they contain exactly the same voxels. This relationship, however, contradicts the requirement that $S_r(v_a) \neq S_r(v_b)$. Thus, the assumption that $v_a$ and $v_b$ are both contained in every searchlight that contains either $v_a$ or $v_b$ cannot be true. 

 Therefore, there necessarily exists a searchlight that contains $v_a$ that does not contain $v_b$, and some other searchlight that contains $v_b$ but not $v_a$. 
\end{proof}

Armed with the properties described by Lemmas \ref{prop:symmetry} and \ref{prop:twovoxel}, we can now numerically estimate the number of searchlights that include a given individual voxel.
  
\vspace{10pt}
\begin{theorem}\label{cor:corsym}
A voxel $v$ is contained in exactly $N_r(v)$ different searchlights, where $N_r(v)$ is the number of voxels contained in the searchlight $S_r(v)$.
\end{theorem}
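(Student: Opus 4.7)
The plan is to derive the theorem as an almost immediate corollary of the symmetry property (Lemma \ref{prop:symmetry}), by rephrasing the question ``in how many searchlights does $v$ appear?'' as ``how many searchlight \emph{centers} place $v$ within radius $r$?'' and then invoking Lemma \ref{prop:symmetry} to turn this into a count of the members of $S_r(v)$ itself.

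Concretely, I would first fix an arbitrary voxel $v \in \mathcal{V}$ and let $\mathcal{C}(v) = \{v' \in \mathcal{V} \; | \; v \in S_r(v')\}$ be the set of center voxels whose searchlight contains $v$. Since the searchlight decomposition $\mathcal{S}_r(\mathcal{V})$ is indexed by $\mathcal{V}$ (Equation \ref{eq:decomp}), the number of distinct searchlights containing $v$ is at most $|\mathcal{C}(v)|$. By Lemma \ref{prop:symmetry}, for each candidate center $v'$ we have $v \in S_r(v')$ if and only if $v' \in S_r(v)$, so $\mathcal{C}(v) = S_r(v)$ as sets of voxels, and hence $|\mathcal{C}(v)| = N_r(v)$.

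The remaining step is to confirm that distinct centers in $\mathcal{C}(v)$ actually give rise to distinct searchlights, so that $N_r(v)$ is the true count rather than an upper bound. For any two distinct centers $v_a, v_b \in S_r(v)$, the uniformity of $\mathcal{V}$ ensures $S_r(v_a) \neq S_r(v_b)$ (a translation of the sphere by one voxel along a principal direction gains and loses boundary voxels). Once this distinctness is in hand, Lemma \ref{prop:twovoxel} applies and guarantees the stronger fact that these searchlights are not even set-equal. Thus the map $v' \mapsto S_r(v')$ restricted to $\mathcal{C}(v)$ is injective, and the count of searchlights containing $v$ is exactly $N_r(v)$.

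I do not anticipate a serious obstacle: the work is essentially done by Lemma \ref{prop:symmetry}, and the only subtlety is clarifying the injectivity from center voxel to searchlight so that ``$N_r(v)$ centers'' truly translates into ``$N_r(v)$ different searchlights.'' This bookkeeping is straightforward under the uniform/connected assumptions already in force.
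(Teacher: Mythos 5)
Your proposal is correct and takes essentially the same route as the paper: both use Lemma \ref{prop:symmetry} to identify the set of center voxels whose searchlights contain $v$ with $S_r(v)$ itself, and then count. Your extra step verifying that distinct centers yield distinct searchlights is a small piece of bookkeeping the paper leaves implicit, not a different approach.
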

\begin{proof}
From Lemma \ref{prop:symmetry}, a voxel $v$ is contained in each searchlight $S_r(v')$, if and only if $v'$ is a voxel in $S_r(v)$. Let $N_r(v)$ be the number of voxels in $S_r(v)$. Therefore, $v$ is present in each of these $N_r(v)$ searchlights. 
\end{proof}
 
For simplicity, we treat $N_r(v)$ as being the same for every searchlight, and write $N_r$ to indicate the canonical number of voxels contained in a spherical volume of radius $r$, for a given resolution of the voxel-space. 

\begin{figure}[tb]
\begin{center}
\includegraphics[width=0.5 \textwidth]{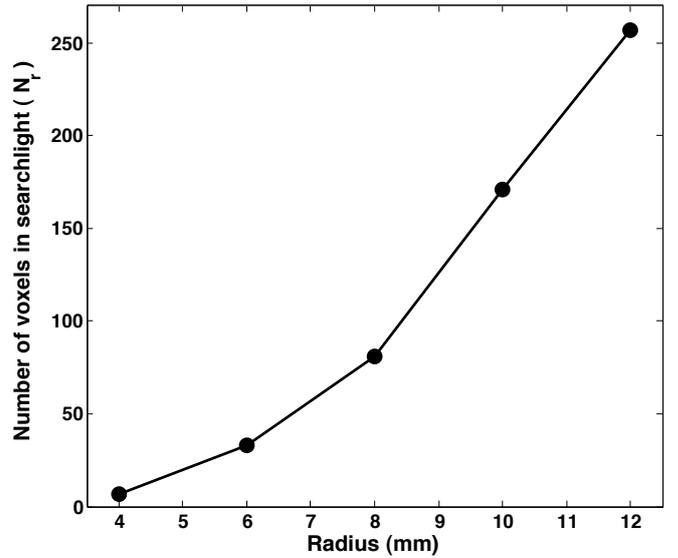}
\caption{Number of voxels in a searchlight volume ($N_r$) as a function of searchlight radius ($r$) in a 3$\times$3$\times$3 mm$^3$ voxel-space.}
\label{fig:numvoxel}
\end{center}
\end{figure}

From Theorem \ref{cor:corsym}, we see that the radius, a parameter chosen by the researcher, directly specifies how often information in a particular voxel is sampled by multiple searchlights. For voxels of size 3mm $\times$ 3mm $\times$ 3mm, the number of voxels contained in searchlights of different radii are shown in Figure \ref{fig:numvoxel}. As can be seen, the number of voxels in a searchlight, that is $N_r$, grows rapidly with the radius of the searchlight $r$, and consequently so do the number of searchlights that include a particular voxel. 

Since searchlights are intended to identify multivoxel response patterns, we extend the single-voxel property in Theorem \ref{cor:corsym} to quantify the membership of a group of multiple voxels placing no constraint on the relative spatial locations of the voxels in the group. 

\vspace{10pt}
\begin{theorem} \label{cor:haystack}
A group of voxels $G$ containing more than one voxel is contained in strictly less than $N_r$ searchlights.
\end{theorem}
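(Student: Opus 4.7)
The plan is to reduce counting searchlights that contain a multi-voxel group $G$ to an intersection of the sets of searchlights containing each individual voxel of $G$, and then to use Lemma~\ref{prop:twovoxel} to show that this intersection is strictly smaller than what Theorem~\ref{cor:corsym} gives for a single voxel.

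First I would note the obvious set-theoretic reformulation: a searchlight $S_r(v)$ contains $G$ if and only if $v_i \in S_r(v)$ for every $v_i \in G$. Writing $A_i := \{S_r(v) \in \mathcal{S}_r(\mathcal{V}) \mid v_i \in S_r(v)\}$ for the collection of searchlights containing voxel $v_i$, the searchlights containing $G$ are exactly the members of $\bigcap_{v_i \in G} A_i$. By Theorem~\ref{cor:corsym} each $A_i$ has cardinality $N_r$, so the question becomes: why must this intersection be a \emph{proper} subset of any single $A_i$?

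To answer this, I would pick any two distinct voxels $v_a, v_b \in G$ (possible because $|G| > 1$). Lemma~\ref{prop:twovoxel} then guarantees the existence of a searchlight containing $v_a$ but not $v_b$, i.e.\ an element of $A_a \setminus A_b$. Hence $A_a \cap A_b \subsetneq A_a$ and $|A_a \cap A_b| < N_r$. Since $\bigcap_{v_i \in G} A_i \subseteq A_a \cap A_b$, the number of searchlights containing $G$ is strictly less than $N_r$.

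The only real subtlety, and hence the step that requires a sentence of care, is verifying the hypothesis $S_r(v_a) \neq S_r(v_b)$ of Lemma~\ref{prop:twovoxel} for any two distinct voxels of $G$. In a uniform voxel-space, distinct centers yield balls with distinct voxel contents (the center of a spherical neighborhood is recoverable from its voxel set), so $v_a \neq v_b$ indeed forces $S_r(v_a) \neq S_r(v_b)$, and the lemma applies. Everything else is a one-line set-inclusion argument; no case analysis on $|G|$ beyond $|G| \geq 2$ is needed, since a single pair of distinct voxels already suffices to drive the strict inequality.
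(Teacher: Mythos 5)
Your proof is correct and follows essentially the same route as the paper's: both arguments reduce the claim to a single pair of distinct voxels in $G$ and invoke Lemma~\ref{prop:twovoxel} to exhibit a searchlight containing one but not the other, so that the searchlights containing $G$ form a proper subset of the $N_r$ searchlights containing any one voxel (Theorem~\ref{cor:corsym}). Your explicit check that distinct centers in a uniform voxel-space yield distinct searchlights, so that Lemma~\ref{prop:twovoxel} applies, is a small point of added rigor that the paper's proof leaves implicit.
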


\begin{proof} 
A voxel is contained in $N_r$ searchlights, from Theorem \ref{cor:corsym}. Consequently, every voxel in $G$ is each contained in $N_r$ searchlights. From Lemma \ref{prop:twovoxel}, for any two voxels $v_a$ and $v_b$, there is necessarily a searchlight that contains $v_a$ and not $v_b$, and vice versa. Therefore, of the $N_r$ searchlights containing $v_a$, there necessarily exists at least one searchlight that contains $v_a$ but not $v_b$. Thus, the number of searchlights that simultaneously contain both $v_a$ and $v_b$ must be less than $N_r$. Since $G$ contains multiple voxels, any pair of voxels in $G$ must be simultaneously contained in less than $N_r$ searchlights. Therefore, all the voxels in $G$ cannot be simultaneously contained in $N_r$ searchlights, and $G$ must be contained in strictly less than $N_r$ searchlights. 
\end{proof}

\subsection{The sampling frequency of voxel(s) increases with searchlight radius}\label{sec:scaling}

Although single-voxels and multivoxel groups are included in different numbers of searchlights for any radius $r$, we now show that the absolute number of searchlights that include either a single-voxel or a multivoxel group increases with the radius of the searchlight. 


\vspace{10pt}
\begin{lemma} \label{prop:inclusion}A searchlight of radius $r_a$ is fully contained in more than one searchlight of radius $r_b$, where $r_b > r_a$ and $S_{r_a}(v) \subset S_{r_b}(v)$ for all $v \in \mathcal{V}$. 
\end{lemma}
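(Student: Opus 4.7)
The plan is to split the claim into two assertions and verify each one separately: first that the concentric enlargement $S_{r_b}(v)$ contains $S_{r_a}(v)$, and second that there exists at least one additional voxel $v' \neq v$ for which $S_{r_b}(v')$ also contains $S_{r_a}(v)$. Together these exhibit at least two distinct searchlights of radius $r_b$ enclosing $S_{r_a}(v)$, which is exactly what ``more than one'' requires.

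For the first assertion I would argue directly from the definition of $S_r$. Any voxel $u \in S_{r_a}(v)$ satisfies $d(u,v) \le r_a$; since $r_b > r_a$, we get $d(u,v) \le r_b$, so $u \in S_{r_b}(v)$. Strict inclusion is guaranteed because the hypothesis $S_{r_a}(v) \subset S_{r_b}(v)$ is baked into the statement of the lemma, so I would only need to note that this inclusion is immediate from the distance comparison.

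For the second assertion I would invoke the uniformity of $\mathcal{V}$ to pick a neighbor $v'$ of $v$ along one of the $d$ principal directions; call the voxel spacing $\delta$, so $d(v,v') = \delta$. Then for any $u \in S_{r_a}(v)$ the triangle inequality gives
\begin{equation}
d(u,v') \;\le\; d(u,v) + d(v,v') \;\le\; r_a + \delta,
\end{equation}
and provided $r_b \ge r_a + \delta$, this forces $u \in S_{r_b}(v')$, hence $S_{r_a}(v) \subseteq S_{r_b}(v')$. Since $v' \neq v$, the searchlights $S_{r_b}(v)$ and $S_{r_b}(v')$ are two distinct enclosures of $S_{r_a}(v)$, completing the argument.

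The main obstacle is ensuring the radius gap $r_b - r_a$ is large enough for the triangle-inequality bound to be useful, i.e. at least as large as the voxel spacing $\delta$. The lemma's hypothesis $S_{r_a}(v) \subset S_{r_b}(v)$ for \emph{all} $v$ is the right hook to exploit here, because strict inclusion in a uniform discrete grid already forces $r_b$ to exceed $r_a$ by at least one step of the lattice; I would phrase the proof so that this discrete-gap observation is explicit, and then pick $v'$ as a neighbor at exactly that minimal spacing. Everything else is routine use of Lemma \ref{prop:symmetry}-style distance reasoning and the uniformity assumption on $\mathcal{V}$.
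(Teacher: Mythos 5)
Your overall strategy is the same as the paper's: the concentric searchlight $S_{r_b}(v)$ is one container, and a second container is obtained by moving the centre to a nearby voxel and applying the triangle inequality, so that any $u$ with $d(u,v)\le r_a$ satisfies $d(u,v')\le r_a + d(v,v')\le r_b$. The paper phrases this by taking the second centre $v''$ to be a voxel of $S_{r_b-r_a}(v)$ other than $v$; you take $v'$ to be a principal-direction neighbour at spacing $\delta$ and require $r_b\ge r_a+\delta$. Up to that point your argument is correct and, if anything, more explicit than the paper's.

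The gap is in your final paragraph. The claim that the hypothesis $S_{r_a}(v)\subset S_{r_b}(v)$ (proper inclusion) for all $v$ already forces $r_b-r_a\ge\delta$ is false: proper inclusion only requires that some voxel $u$ satisfy $r_a< d(u,v)\le r_b$, and the achievable lattice distances $\delta\sqrt{n_1^2+\cdots+n_d^2}$ can be spaced much more finely than $\delta$. Concretely, in a $d=2$ grid with $\delta=1$, take $r_a=1.2$ and $r_b=1.5$: then $S_{r_a}(v)$ is the five-voxel plus and $S_{r_b}(v)$ is the nine-voxel square (the diagonal voxels at distance $\sqrt{2}$ enter), so the inclusion is proper for every $v$, yet $r_b-r_a=0.3<\delta$. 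Worse, in this example the conclusion of the lemma itself fails: the only voxel within distance $1.5$ of all four arms of the plus is $v$ itself, so $S_{r_a}(v)$ is contained in exactly one searchlight of radius $r_b$. The condition $r_b-r_a\ge\delta$ is therefore a genuine extra hypothesis, not a consequence of the stated ones. You are in good company --- the paper's own proof silently makes the same assumption when it selects a voxel $v''\in S_{r_b-r_a}(v)$ distinct from $v$, which exists only when $r_b-r_a\ge\delta$. The honest fix is to add $r_b\ge r_a+\delta$ as an explicit hypothesis (it holds for the radius increments actually used in the paper), at which point your triangle-inequality argument goes through verbatim.
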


\begin{proof}  
Consider two searchlights centered at the same voxel $v$ -- one that has a radius $r_a$, and the other having radius $r_b$. By definition, since $S_{r_a}(v) \subset S_{r_b}(v)$, all the voxels in $S_{r_a}(v)$ are members of $S_{r_b}(v)$, and there exists at least one voxel in $S_{r_b}(v)$ that is not in $S_{r_a}(v)$.

Now, consider the searchlight $S_z(v)$, having radius $z = r_b - r_a$.  Due to the spherical shape of searchlights, the maximum distance between a voxel $v''$ in $S_z(v)$ and some voxel $v'$ in $S_{r_a}(v)$ is equal to $r_a + z = r_b$. Therefore, all other voxels in $S_{r_a}(v)$ must have distances less than or equal to $r_b$.

Since the distance between these two maximally distant voxels $v'$ and $v''$ is equal to $r_b$, the voxel $v'$ must be contained in a searchlight of radius $r_b$ that is centered at $v''$, namely, $S_{r_b}(v'')$. Since all other voxels in $S_{r_a}(v)$ have a distance less than or equal to $r_b$ from $v''$, it follows that every voxel in $S_{r_a}(v)$ is also contained in the searchlight $S_{r_b}(v'')$. Thus every voxel in $S_{r_a}(v)$ is contained in at least two searchlights having radius $r_b$, namely, $S_{r_b}(v)$ and $S_{r_b}(v'')$. Therefore, a searchlight $S_{r_a}(v)$ is contained in more than one searchlight of radius $r_b$, where $r_b > r_a$ and $S_{r_a}(v) \subset S_{r_b}(v)$. 
\end{proof}

Using Lemma \ref{prop:inclusion}, we can now prove a general scaling property. Irrespective of the size of a voxel group, the frequency with which it is sampled by different searchlights increases with the radius of the searchlight - a property that we prove next. 

\vspace{10pt}
\begin{theorem} \label{cor:scaling}A group of voxels $G$ is contained in more searchlights of radius $r_b$ than searchlights of radius $r_a$, where $G\subseteq S_{r_a}$, $r_b > r_a$ and $S_{r_a}(v) \subset S_{r_b}(v)$, for all $v \in \mathcal{V}$.
\end{theorem}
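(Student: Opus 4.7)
The plan is to reformulate the theorem set-theoretically and then invoke Lemma \ref{prop:inclusion} at a well-chosen center voxel. Let $\mathcal{A}_r = \{v \in \mathcal{V} : G \subseteq S_r(v)\}$; the theorem is equivalent to the strict inequality $|\mathcal{A}_{r_b}| > |\mathcal{A}_{r_a}|$. The hypothesis $G \subseteq S_{r_a}$ makes $\mathcal{A}_{r_a}$ non-empty, and the hypothesis $S_{r_a}(v) \subset S_{r_b}(v)$ for all $v$ immediately gives the weak inclusion $\mathcal{A}_{r_a} \subseteq \mathcal{A}_{r_b}$: any $v$ with $G \subseteq S_{r_a}(v)$ also satisfies $G \subseteq S_{r_b}(v)$. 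The proof therefore reduces to exhibiting a single voxel in $\mathcal{A}_{r_b} \setminus \mathcal{A}_{r_a}$.

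To produce such a voxel, I would apply Lemma \ref{prop:inclusion} to a $v_0 \in \mathcal{A}_{r_a}$. Reading off the construction used in the proof of that lemma, every voxel $v''$ lying in the auxiliary searchlight $S_z(v_0)$ of radius $z = r_b - r_a$ satisfies $S_{r_a}(v_0) \subseteq S_{r_b}(v'')$. Composing with $G \subseteq S_{r_a}(v_0)$ gives $G \subseteq S_{r_b}(v'')$, so the entire set $S_z(v_0)$ lies inside $\mathcal{A}_{r_b}$; Lemma \ref{prop:inclusion} furthermore guarantees that $S_z(v_0)$ contains at least one voxel other than $v_0$.

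The main obstacle is that the additional voxels of $S_z(v_0)$ could, a priori, all already belong to $\mathcal{A}_{r_a}$, so a careless choice of $v_0$ need not certify a strict gain. I would resolve this by picking $v_0$ to be a boundary voxel of $\mathcal{A}_{r_a}$, meaning $S_z(v_0) \not\subseteq \mathcal{A}_{r_a}$. Such a $v_0$ must exist: $\mathcal{A}_{r_a}$ is a finite, non-empty, proper subset of the connected uniform voxel-space $\mathcal{V}$ (any voxel sufficiently far from $G$ has some $g \in G$ at distance greater than $r_a$), and in a connected voxel-space any proper non-empty subset must contain an element whose neighborhood reaches outside it. Taking $v'' \in S_z(v_0) \setminus \mathcal{A}_{r_a}$ then places $v''$ simultaneously in $\mathcal{A}_{r_b}$, by the previous paragraph, and outside $\mathcal{A}_{r_a}$, by construction. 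This furnishes the required element of $\mathcal{A}_{r_b} \setminus \mathcal{A}_{r_a}$ and hence $|\mathcal{A}_{r_b}| > |\mathcal{A}_{r_a}|$.
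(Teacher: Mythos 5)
Your proposal is correct and follows essentially the same route as the paper's proof: both establish the weak inclusion $\mathcal{A}_{r_a} \subseteq \mathcal{A}_{r_b}$ from $S_{r_a}(v) \subset S_{r_b}(v)$, then use the fact that $\mathcal{A}_{r_a}$ is a proper non-empty subset of the connected voxel-space to locate a voxel $v_j$ just outside $\mathcal{A}_{r_a}$ but within $r_b - r_a$ of some $v_i \in \mathcal{A}_{r_a}$, and finally invoke the construction in Lemma \ref{prop:inclusion} to conclude $S_{r_a}(v_i) \subset S_{r_b}(v_j)$ and hence $G \subseteq S_{r_b}(v_j)$. Your ``boundary voxel'' phrasing is just a cleaner packaging of the paper's adjacent pair $(v_i, v_j)$, and your argument inherits the same implicit assumption the paper makes, namely that $r_b - r_a$ is at least one voxel spacing so that $S_{r_b - r_a}(v_i)$ reaches an adjacent voxel.
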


\begin{proof} Let $K_{r_{a}}$ and $K_{r_{b}}$ be the number of searchlights of radius $r_{a}$ and $r_{b}$ that contain $G$.

Since $S_{r_a}(v) \subset S_{r_b}(v)$ for every $v \in \mathcal{V}$, it follows, by transitivity, that if $G \subseteq S_{r_a}(v_{i})$ for some voxel $v_{i} \in \mathcal{V}$, then $G \subseteq S_{r_b}(v_{i})$. Therefore, the number of searchlights of radius $r_{b}$ that contain $G$ cannot be strictly less than that for $r_{a}$, that is, $K_{r_b}\nless K_{r_a}$.

By the transitivity of the subset relation, if $G \subseteq S_{r_a}(v_i)$ and $S_{r_a}(v_i) \subset S_{r_b}(v_j)$ for some $v_i,v_j \in \mathcal{V}$, then it follows that $G \subset S_{r_b}(v_j)$. From Lemma \ref{prop:inclusion}, a searchlight of radius $r_a$ is contained in multiple searchlights of radius $r_b$ where $r_b > r_a$ and $S_{r_a}(v) \subset S_{r_b}(v)$ (for all $v \in \mathcal{V}$). Since there is more than one searchlight of radius $r_b$ containing $S_{r_a}(v_{i})$, for every searchlight for which $G \subseteq S_{r_a}(v_{i})$ holds true, it implies that $K_{r_b} \geq K_{r_a}$.

From Theorems \ref{cor:corsym} and \ref{cor:haystack}, the number of searchlights of radius $r_a$ that can contain $G$ is less than or equal to $N_{r_a}$. Consequently, in a uniform and connected voxel-space, it follows that there exist two adjacent voxels $v_i$ and $v_j$ in $\mathcal{V}$ such that $G$ is a subset of $S_{r_a}(v_i)$ but is not a subset of $S_{r_a}(v_j)$. From Lemma \ref{prop:inclusion}, searchlights of radius $r_b$ centered at voxels within $r_b - r_a$ from $v_i$ fully contain all voxels in $S_{r_a}(v_i)$. Since $S_{r_a}(v) \subset S_{r_b}(v)$, it implies that the distance of $v_j$ to $v_i$ is less than or equal to $r_b - r_a$. Therefore, $S_{r_a}(v_i) \subset S_{r_b}(v_j)$ and consequently $G \subset S_{r_b}(v_j)$. Since $G \not\subset S_{r_a}(v_j)$ and $G \subset S_{r_b}(v_j)$, it implies that there exists at least one voxel at which a searchlight of radius $r_b$ contains $G$, but where a searchlight of radius $r_a$ does not contain $G$. Consequently, $K_{r_b}$ must be strictly greater than $K_{r_a}$, that is, the group of voxels $G$ is contained in more searchlights of radius $r_b$ than $r_a$. 
\end{proof}

Theorem \ref{cor:scaling} above establishes that the number of searchlights that include either a voxel or group of voxels increases monotonically with the radius of the searchlight. How then does this scaling of the sampling bias influence the properties of the information map?

\subsection{An optimistic bias in the information map \label{sec:smearing}}

Recall that $F_r$ (Equation \ref{eq:totalnum}) is an index of the sensitivity of the searchlight method in detecting multivoxel response patterns, and is equal to the total number of informative searchlights with a particular search decomposition. We now prove that as a direct consequence of how the sampling bias scales with the searchlight radius, the value of $F_r$ also increases strictly monotonically with increasing searchlight radius.

\vspace{10pt}
\begin{theorem}\label{thm:monotone}
For two searchlight radii, $r_a$ and $r_b$, where $r_b > r_a$ and $S_{r_a}(v) \subset S_{r_b}(v)$ for every $v \in \mathcal{V}$, if $0<F_{r_a}<V$ then $F_{r_b} > F_{r_a}$.
\end{theorem}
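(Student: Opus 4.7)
The plan is to split the argument into two parts: first establish $F_{r_b} \geq F_{r_a}$ by a direct monotonicity argument, then upgrade the inequality to a strict one using the two-sided condition $0 < F_{r_a} < V$ combined with the connectedness of $\mathcal{V}$.

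For the monotonicity half, I would introduce the informative set $A_r := \{v \in \mathcal{V} : I(S_r(v)) = 1\}$, so that $F_r = |A_r|$. For any $v \in A_{r_a}$, the group $G := S_{r_a}(v)$ is itself informative, and the hypothesis $S_{r_a}(v) \subset S_{r_b}(v)$ places $G$ inside $S_{r_b}(v)$; the SIN assumption then forces $I(S_{r_b}(v)) = 1$, so $v \in A_{r_b}$. This gives $A_{r_a} \subseteq A_{r_b}$ and hence $F_{r_b} \geq F_{r_a}$.

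To sharpen this to a strict inequality, I would exploit the fact that $A_{r_a}$ is a nonempty proper subset of $\mathcal{V}$ (by $0 < F_{r_a} < V$). Connectedness of $\mathcal{V}$ guarantees a path between any voxel in $A_{r_a}$ and any voxel in its complement, and along this path some consecutive pair $(v_i, v_j)$ of principal-direction neighbors must straddle the boundary, with $v_i \in A_{r_a}$ and $v_j \notin A_{r_a}$. Setting $G := S_{r_a}(v_i)$, the mechanism behind Lemma \ref{prop:inclusion} and Theorem \ref{cor:scaling} --- that shifting the searchlight center by at most $r_b - r_a$ keeps $S_{r_a}(v_i)$ inside some radius-$r_b$ searchlight --- shows $G \subseteq S_{r_b}(v_j)$ for the adjacent voxel $v_j$. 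Invoking SIN once more yields $I(S_{r_b}(v_j)) = 1$, so $v_j \in A_{r_b} \setminus A_{r_a}$, which witnesses the strict containment.

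The main obstacle is really the second step: monotonicity is almost immediate from SIN, but the strict increase needs a boundary voxel that is simultaneously non-informative at radius $r_a$ and captured at radius $r_b$. The connectedness hypothesis is what forces such a boundary pair to exist, while the shift argument from Lemma \ref{prop:inclusion} is what transfers informativeness across that boundary. Both ingredients are already available by the time the theorem is stated, so the proof reduces to gluing them together and then reading off $|A_{r_b}| > |A_{r_a}|$.
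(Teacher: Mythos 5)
Your proposal is correct and follows essentially the same route as the paper's own proof: monotonicity via SIN applied to $G=S_{r_a}(v)\subset S_{r_b}(v)$, then strictness by locating an adjacent boundary pair $v_i\in A_{r_a}$, $v_j\notin A_{r_a}$ (which the paper also obtains from $0<F_{r_a}<V$ and connectedness) and transferring informativeness to $S_{r_b}(v_j)$ via the shift mechanism of Lemma \ref{prop:inclusion}. The set-theoretic phrasing with $A_r$ and the explicit path argument for the boundary pair are just cleaner presentations of the same steps.
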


\begin{proof}
Since $S_{r_a}(v) \subset S_{r_b}(v)$ for every $v \in \mathcal{V}$, by the SIN assumption, it follows that if $I(S_{r_a}(v))=1$ then $I(S_{r_b}(v))=1$, for any voxel $v\in\mathcal{V}$. Therefore, the number of informative searchlights of radius $r_{b}$ cannot be strictly less than that for $r_{a}$, that is, $F_{r_b} \nless F_{r_a}$, for any value of $F_{r_a}$.

From Lemma \ref{prop:inclusion}, a searchlight of radius $r_a$ is contained in multiple searchlights of radius $r_b$ where $r_b > r_a$ and $S_{r_a}(v) \subset S_{r_b}(v)$ (for all $v \in \mathcal{V}$). For every searchlight for which $I(S_{r_a}(v))=1$, there is more than one searchlight of radius $r_b$ containing $S_{r_a}(v)$. Since each informative searchlight of radius $r_{a}$ is a subset of multiple searchlights of radius $r_{b}$, by the SIN assumption, it implies that $F_{r_b} \geq F_{r_a}$.

Let $0<F_{r_a}<V$. Since $F_{r_{a}} < V$, there necessarily exist two adjacent voxels $v_i$ and $v_j$ such that $I(S_{r_a}(v_i))=1$ and $I(S_{r_a}(v_j))=0$. By the same logic used to prove Theorem \ref{cor:scaling}, searchlights of radius $r_b$ centered at voxels within $r_b - r_a$ from $v_i$ fully contain all voxels in $S_{r_a}(v_i)$. Consequently, by the SIN assumption, $I(S_{r_b}(v_j)) =1$. This implies that a searchlight centered at voxel $v_j$ is informative if it has a radius $r_b$ but not if it has a radius $r_a$. Therefore $F_{r_b} > F_{r_a}$. 
\end{proof}
    
What does Theorem \ref{thm:monotone} have to do with optimism?  The monotonic increases in the number of informative searchlights is due to increases in the sampling bias, which in turn is due to the use of a multivoxel searchlight. Specifically, it is possible to obtain an increased ``sensitivity" of the information map simply by increasing the radius of the multivoxel searchlights, with no reference to the statistical properties of the voxel-responses, i.e., whether they in fact exhibit multivariate response differences.

\section{An illustration}\label{sec:simulation}

In this section, we present simulations to provide a concrete intuition for the analytical results above, and their implications. For ease of demonstration, the voxel-space $\mathcal{V}$ for all simulations consisted of a single axial slice having two principal directions. All the voxels in this voxel-space were populated with simulated response information from two fictitious experimental conditions $A$ and $B$.  These simulated data were subjected to the searchlight-procedure to produce information maps. The radius $r$ of the searchlights used for the searchlight decomposition $\mathcal{S}_r(\mathcal{V})$ was varied systematically to produce a corresponding information map for each radius value. The radius took the values: $4$ mm, $6$ mm, $8$ mm, $10$ mm and $12$ mm, corresponding to searchlights containing $5$ voxels, $13$ voxels, $21$ voxels, $37$ voxels and $49$ voxels respectively. 

The simulated response-data differed in the number and relative spatial location of the voxels that were responsive to the experimental conditions. In the first of these simulations discussed next, a single voxel contained task-relevant information while all the remaining voxels did not. 

\begin{figure*}[!tbp]
\begin{center}
\subfigure[]{\includegraphics[height=2.3in]{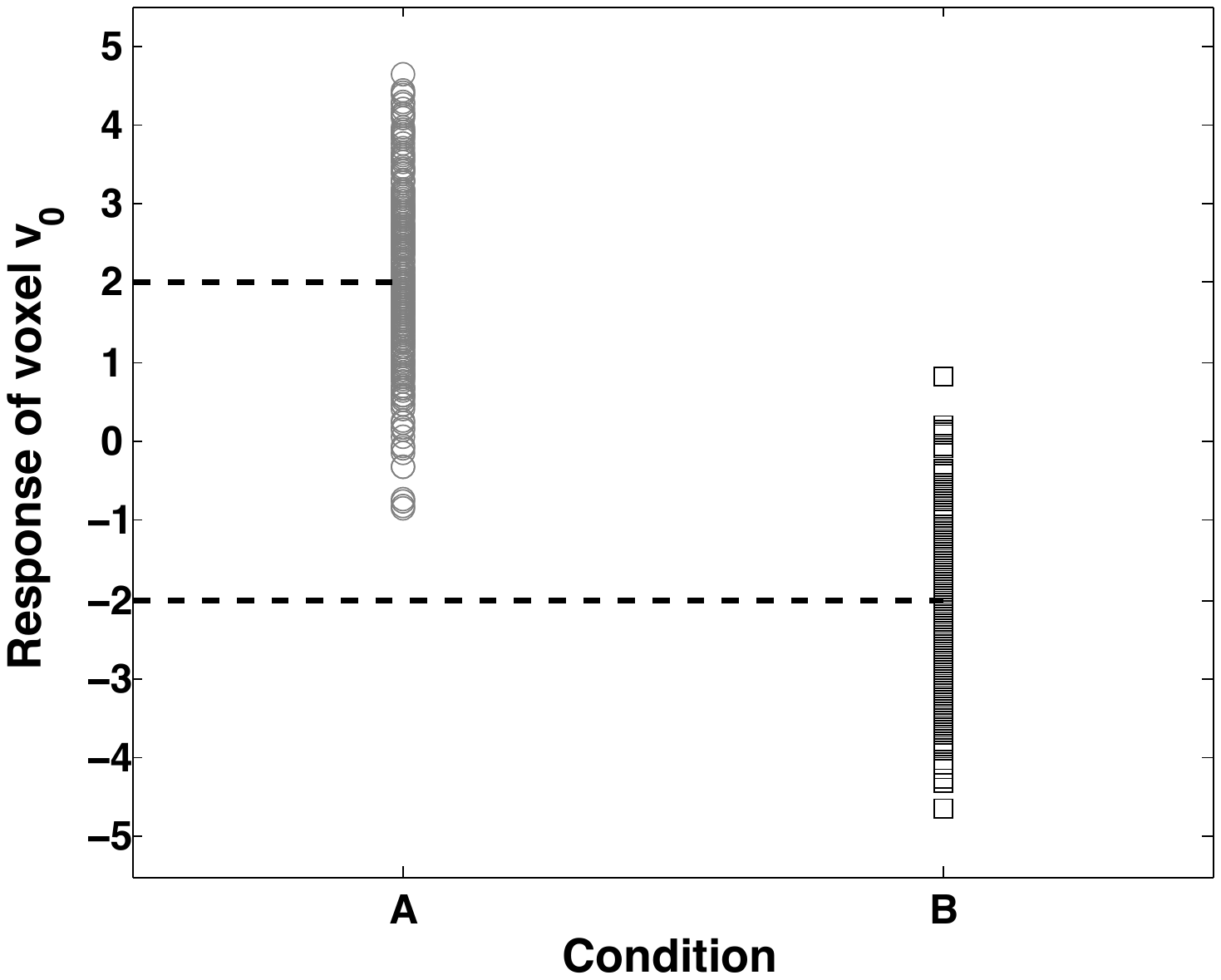}}
\hspace{0.5in}
\subfigure[]{\includegraphics[height=2.3in]{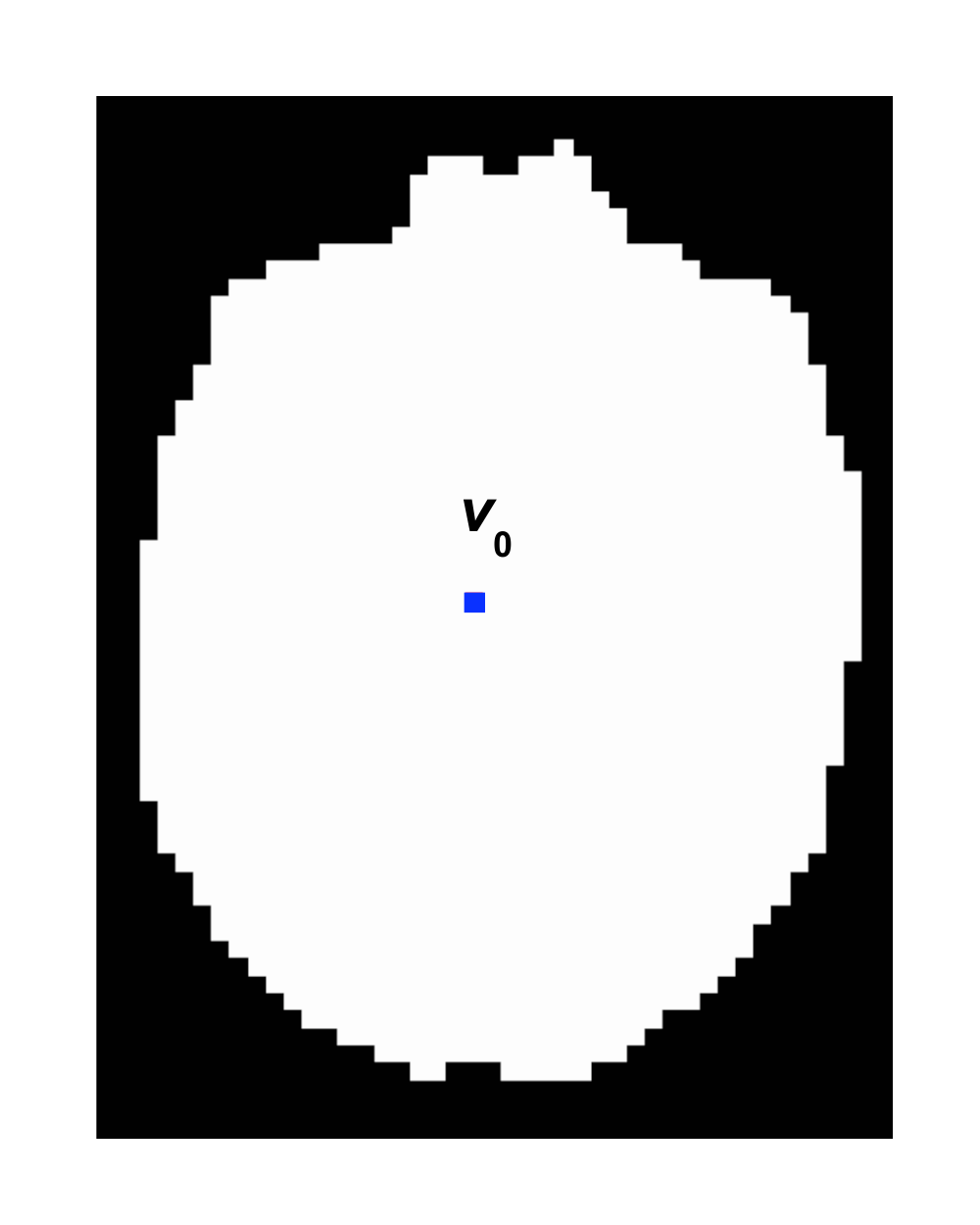}}
\caption{{\bf Single-voxel response}: (a) Scatter plot of simulated responses of voxel $v_0$ to conditions $A$ and $B$. A total of $300$ samples were drawn per condition (see text). (b) Spatial location of voxel $v_0$ (indicated by blue square) in the simulated single-slice voxel space.}
\label{fig:needle}
\end{center}
\end{figure*}

\subsection{The needle-in-the-haystack effect}\label{sec:needleinhay}

Suppose there exists some voxel in $\mathcal{V}$, say $v_0$, that exhibits a response difference to the experimental conditions such that the informativeness function identifies $v_0$ as being task-relevant, that is, $I(v_0) = 1$. Since $I(v_0) = 1$, by the SIN assumption it follows that each of the searchlights that contain $v_0$ should also be deemed to be informative as well.  Recall that, according to Theorem \ref{cor:corsym}, each voxel $v$ in $\mathcal{V}$ is contained in exactly $N_r$ searchlights where $N_r = |S_r(v)|$. It then follows that the signal-carrying voxel $v_0$ should be contained in $N_r$ searchlights, each being centered at a voxel in $S_r(v_0)$. Thus, a single signal-carrying voxel (a ``needle") should produce a cluster having $N_r$ voxels on the information map (a ``haystack"). 

To simulate this ``needle-in-the-haystack" effect, the task-relevant responses of $v_0$ in conditions $A$ and $B$ took the form illustrated in Figure \ref{fig:needle}(a). The responses to both conditions were drawn randomly from a normal distribution with standard deviation $\sigma = 1$. The voxel $v_0$'s mean response to condition $A$ was $\mu_A = +2$; and $\mu_B = -2$ for condition $B$. The responses of all other (non task-relevant) voxels were drawn from normal distributions having $\sigma=1$ where $\mu_A = \mu_B = 0$. To maximize the sensitivity of the searchlight statistic and emulate the requirements of the SIN assumption, a total of $300$ samples were drawn for each condition.  The spatial position of voxel $v_0$ is shown in blue in Figure \ref{fig:needle}(b). The voxel was placed far from the boundaries of the slice to avoid truncations of the searchlights and to emulate a uniform voxel-space in the vicinity of $v_0$. 

With this setup, the searchlight decomposition and testing procedure was implemented using the PyMVPA toolbox  \citep{Hanke:2009fk}. Each searchlight's informativeness was determined by evaluating the decodability of its responses, i.e., testing for the existence of a model that accurately classifies a sample's membership in each condition based on the searchlight's responses  \citep{Pereira:2011bl,Pereira:2009ba}. Decodability was tested using a linear Support Vector Machine (SVM) with a soft-margin regularization parameter, $C=1$. The searchlight statistic was the mean classification accuracy obtained using a Leave-One-Out (LOO) cross-validation procedure.

\begin{figure*}[!htbp]
\begin{center}
\subfigure[]{\includegraphics[width=0.58\textwidth]{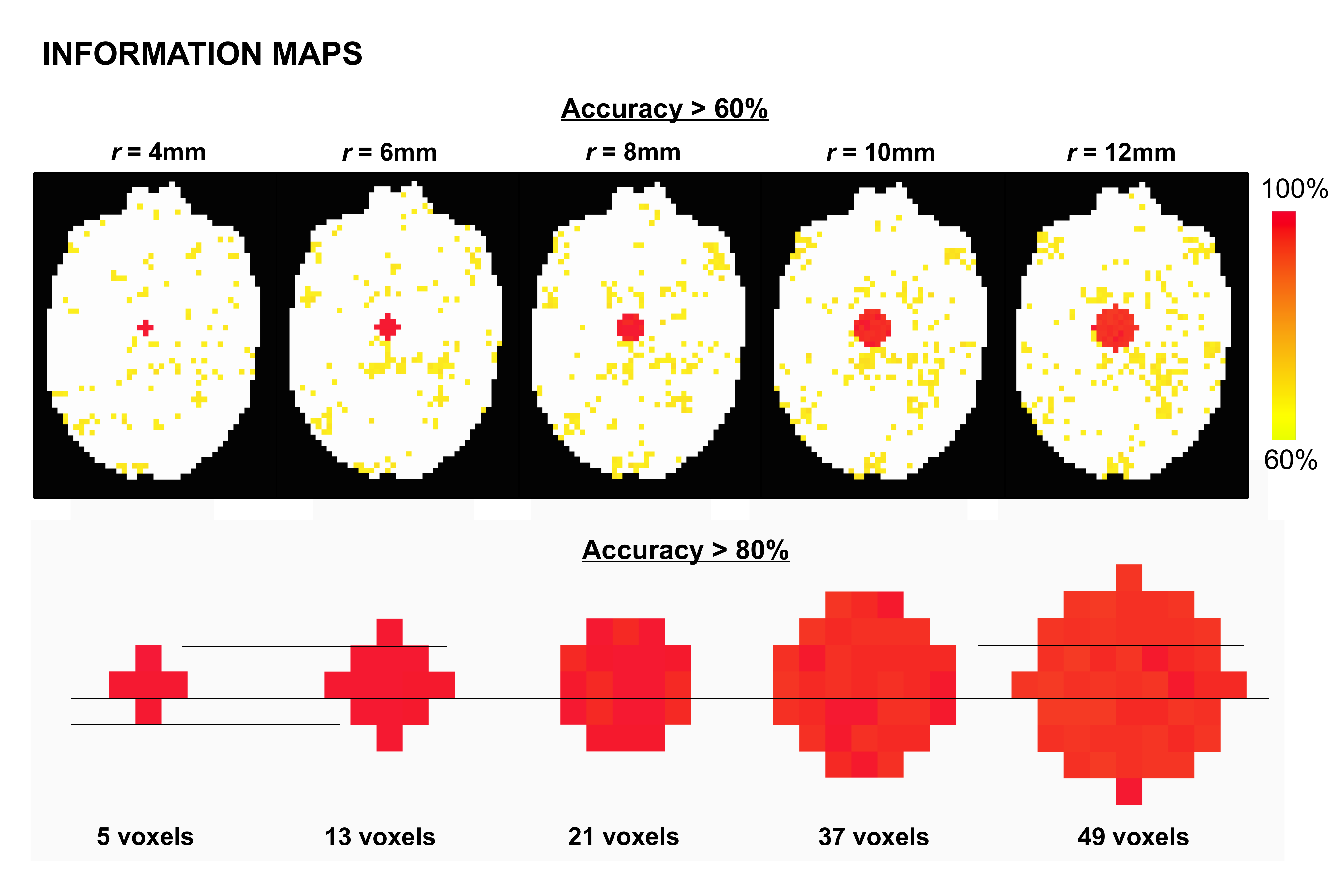}}
\subfigure[]{\includegraphics[width=0.41\textwidth]{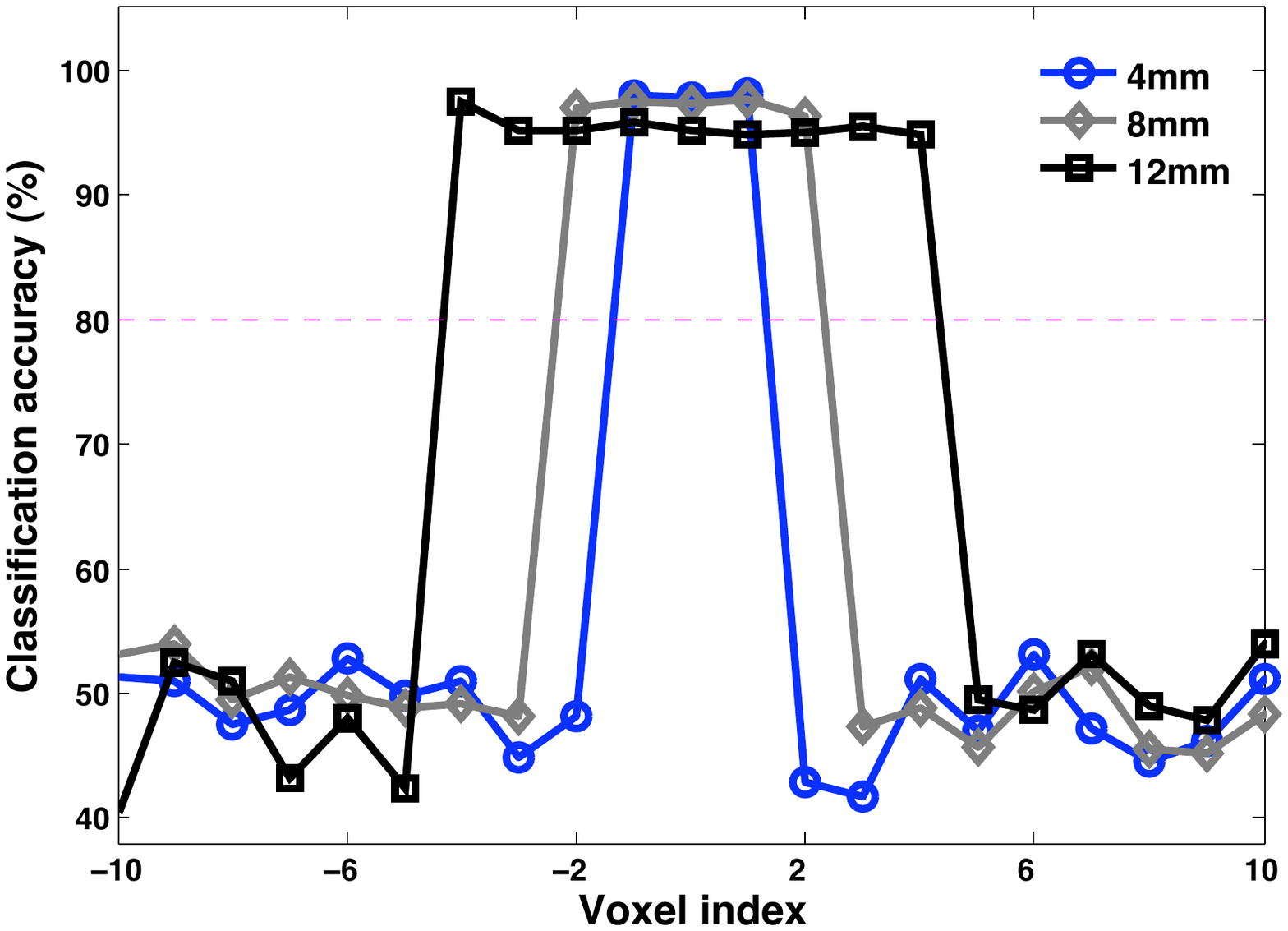}}
\caption{{{\bf Information maps for single-voxel response}: (a) Upper panel shows the information-maps across the entire slice as a function of increasing searchlight radius going from left to right (threshold = $60\%$ accuracy). Lower panel shows the corresponding expanded view of the high-accuracy clusters (threshold = $80\%$ accuracy) for each searchlight radius value. Horizontal lines are to provide a common reference to compare relative sizes of clusters. (b) Cross-sectional profile of the information map showing classification accuracies for searchlights centered at voxels on a horizontal 1D slice through the voxel $v_0$. The dotted horizontal line indicates the thresholding value of $80\%$. Only the profiles for radii $4$mm, $8$mm and $12$mm are shown.}}
\label{fig:needleped}
\end{center}
\end{figure*}

Figure \ref{fig:needleped}(a) shows the information maps obtained (thresholded at $60\%$). In the upper-panel, going from the left to the right in order of increasing radius, we see that there is a single high accuracy cluster (red-colored voxels) centered at the signal-carrying voxel $v_0$, and this cluster grows in size with increasing radius.  The lower-panel shows an expanded view of this high accuracy cluster, thresholded at $80\%$. Consistent with the predictions described above, for each radius, the size and shape of these clusters on the information map correspond exactly to the size, shape and location of the searchlight $S_r(v_0)$ centered at voxel $v_0$. Furthermore, consistent with Theorem \ref{thm:monotone},  the number of informative searchlights identified ($F_r$) increases in a monotonic manner with the radius of the searchlight, even though there is no difference in the actual information present or even any multivoxel response patterns to speak of.

Figure \ref{fig:needleped}(b) shows the values on the information map from a single 1D segment running horizontally through the voxel $v_0$ through the diameter of the searchlights centered at $v_0$. The voxel $v_0$ is assigned a value $0$. Consistent with the SIN assumption, the accuracies on the information map do not exhibit a smooth degradation as a function of the distance from $v_0$. Critically, this pedestal-like profile is unlike the profile that would be expected if the searchlights were the equivalent of a ``spatial smoothing" kernel on the information map. 

What is the comparable effect on the information map when the task-relevant signal is distributed over multiple voxels?
We next consider this scenario.

\subsection{The haystack-in-the-needle effect}

Suppose there are two voxels, $v_1$ and $v_2$ in $\mathcal{V}$, that conjointly exhibit a response difference to the experimental conditions. However, neither voxel by itself shows a task-relevant difference. That is, $I(\{v_1,v_2\}) = 1$ and $I(v_1) = I(v_2) = 0$. By the SIN assumption, every searchlight that contains both $v_1$ and $v_2$ should be informative, but searchlights that contain either $v_1$ or $v_2$ alone would not necessarily be informative. Recall that, according to Theorem \ref{cor:haystack}, a group of multiple voxels (i.e., having more than one voxel) is contained in strictly less than $N_r$ searchlights. It then follows that the signal-carrying voxel group $\{v_1,v_2\}$ should produce a cluster having \emph{less} than $N_r$ voxels on the information map, i.e., a multivoxel ``haystack''  should produce a ``needle"-like cluster, unlike the needle-in-the-haystack scenario in Section \ref{sec:needleinhay} above.

To simulate this ``haystack-in-the-needle" effect, the task-relevant responses in the two voxels $v_1$ and $v_2$ took the form shown in Figure \ref{fig:haystack}(a). The responses to each condition were drawn randomly from a normal distribution having standard deviation $\sigma = 1$. Each voxel's mean response to conditions $A$ and $B$ are shown as dotted lines. The voxel $v_1$ had an identical mean response to both conditions $A$ and $B$, specifically, $\mu_A = \mu_B = 0$ (the horizontal dotted line); while voxel $v_2$'s mean response to condition $A$ was $\mu_A = +0.5$ and to condition $B$ was $\mu_B = -0.5$ (indicated by each of the dotted vertical lines).  Importantly, the responses of voxel $v_1$ and $v_2$ to both conditions were correlated negatively. The response of voxel $v_1$ on condition $A$, denoted as $X_{1,A}$ was equal to $-X_{2,A}$, the response of voxel $v_2$ to condition $A$. Similarly, for condition $B$, $X_{1,B} = -X_{2,B}$. The simulated responses of all other voxels were drawn from distributions having $\sigma=1$ and $\mu_A = \mu_B = 0$, and were uncorrelated with the responses in either voxel $v_1$ or $v_2$. As with the previous simulation above, a total of $300$ samples were drawn for each condition. With signals of this form, the conjoint responses of voxels $v_1$ and $v_2$ to conditions $A$ and $B$ are linearly separable (see Figure \ref{fig:haystack}(a)). However, $A$ and $B$ cannot be distinguished from the responses in $v_1$, but should be weakly discriminable from the responses in $v_2$.

\begin{figure*}[htbp]
\begin{center}
\subfigure[]{\includegraphics[height=2.2in]{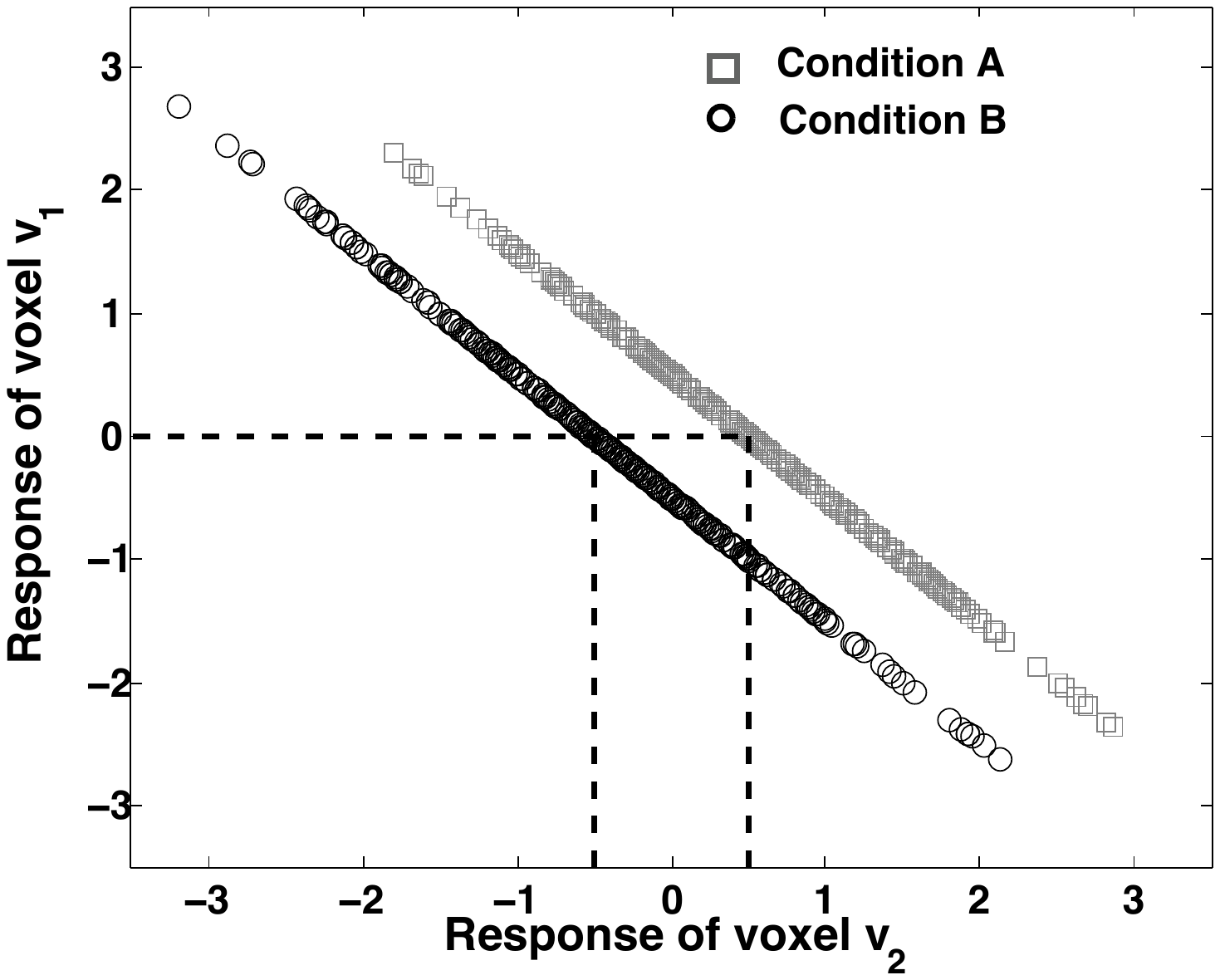}}
\hspace{0.5in}
\subfigure[]{\includegraphics[height=2.2in]{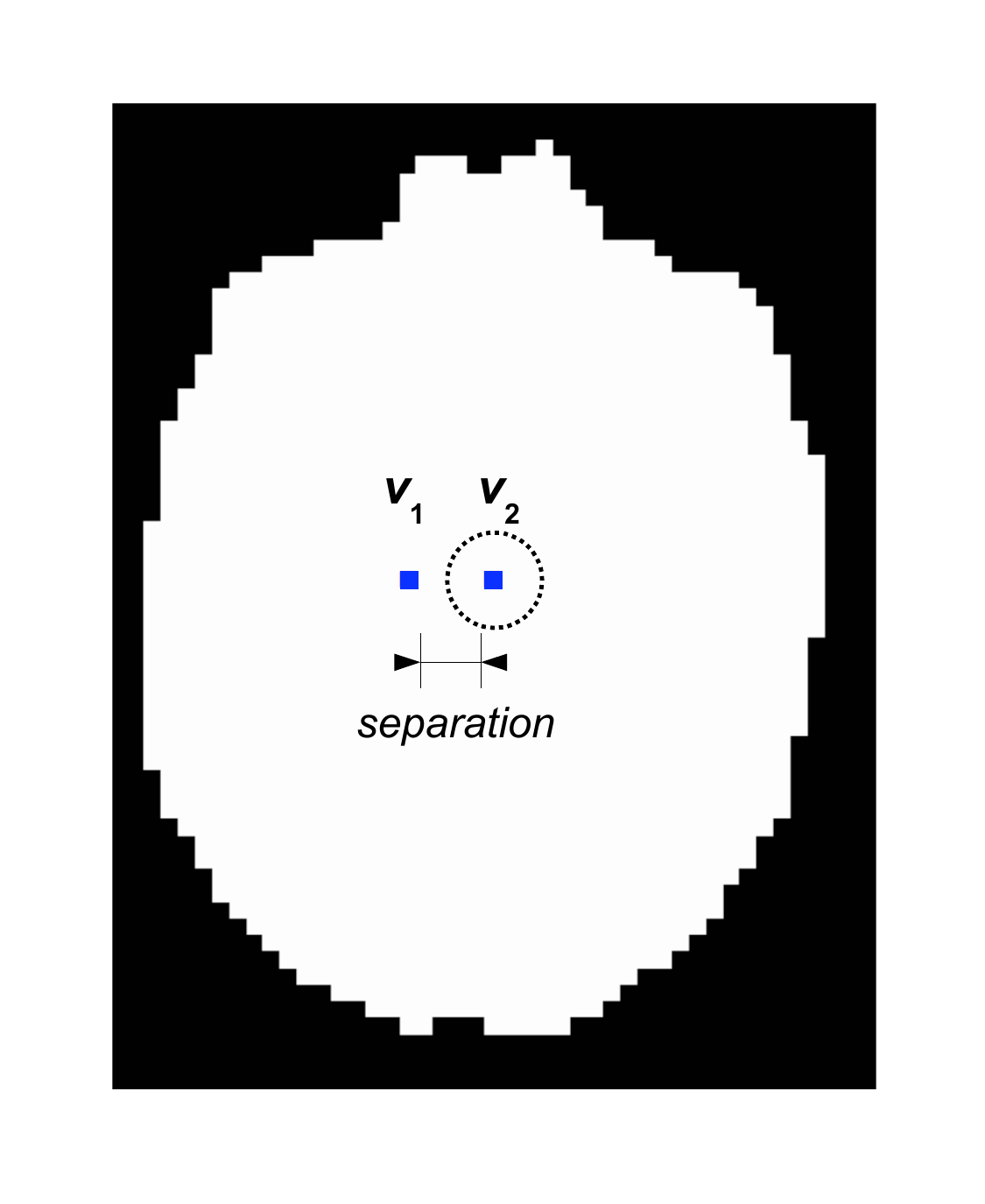}}
\caption{
{{\bf Multivoxel response}: (a) Scatter plot of simulated responses of voxels $v_1$ and $v_2$ to conditions $A$ (gray squares) and $B$ (black dots). Dotted lines indicate the mean response of each voxel alone to $A$ and $B$. (b) Spatial location of voxels $v_1$ and $v_2$ (indicated by blue squares) in the simulated single-slice voxel space. The separation between $v_1$ and $v_2$ was either 2 or 3 voxels (see text). The responses of voxel $v_2$ show a weak response difference to the conditions, indicated by the dotted circle.}}
\label{fig:haystack}
\end{center}
\end{figure*}

The relative spatial positions of $v_1$ and $v_2$, indicated as blue squares, are shown in Figure \ref{fig:haystack}(b). We considered two cases, where $v_1$ and $v_2$ were separated by $2$ voxels in one case; and by $3$ voxels in the other. When $v_1$ and $v_2$ have a separation of $2$ voxels, there is no one searchlight of radius $4$mm that can contain both of these voxels. With a separation of $3$ voxels, there are no searchlights of radius $4$ mm, $6$ mm, or $8$ mm that can contain both $v_1$ and $v_2$. With this setup, the searchlight decomposition and testing procedure was simulated in the same manner as in Section \ref{sec:needleinhay}.
 
Figure \ref{fig:haystackped} shows the portions of the information maps in the vicinity of voxels $v_1$ and $v_2$ (thresholded at $60\%$). In all the information maps, the above-threshold cluster takes the size and shape of the corresponding searchlight and is centered at voxel $v_2$, namely, the voxel exhibiting a weak response difference to conditions $A$ and $B$. This ``needle-in-the-haystack" organization is consistent with the simulations in Section \ref{sec:needleinhay}, and is invariant to the number of voxels separating $v_1$ and $v_2$. 

Now, observe that the clusters in several, but not all, of the information maps contain sub-clusters consisting of voxels  having high classification accuracies (indicated in red). These voxels on the information map correspond to the centers of searchlights that contain both $v_1$ and $v_2$. As required by Theorem \ref{cor:haystack}, for each radius, the number of high-accuracy voxels in the cluster are less than $N_r$. Due to the geometric constraint defined by the separation between $v_1$ and $v_2$, the presence of any high-accuracy voxels at all in an information map depends on the radius of the searchlights used. For example, information maps obtained with searchlights of radius $4$ mm do not contain any high-accuracy voxels for both separations (top row), while the information maps for searchlights of radius $8$mm contain high-accuracy voxels for the $2$ voxel separation but not for the $3$ voxel separation. 

Figures \ref{fig:haystackped}(a) and (b) show the 1D cross-section of the information map through the horizontal diameter of the clusters for the $2$ voxel and $3$ voxel separations respectively. As evident, there is a ``smearing", rather than smoothing, of the accuracies with growing radius values, as in Figure \ref{fig:needleped}(b). Furthermore, when a searchlight is large enough to include both $v_1$ and $v_2$, there is a large increase in the classification accuracy. 

\begin{figure*}[htbp]
\begin{center}
\subfigure[]{\includegraphics[height=4in]{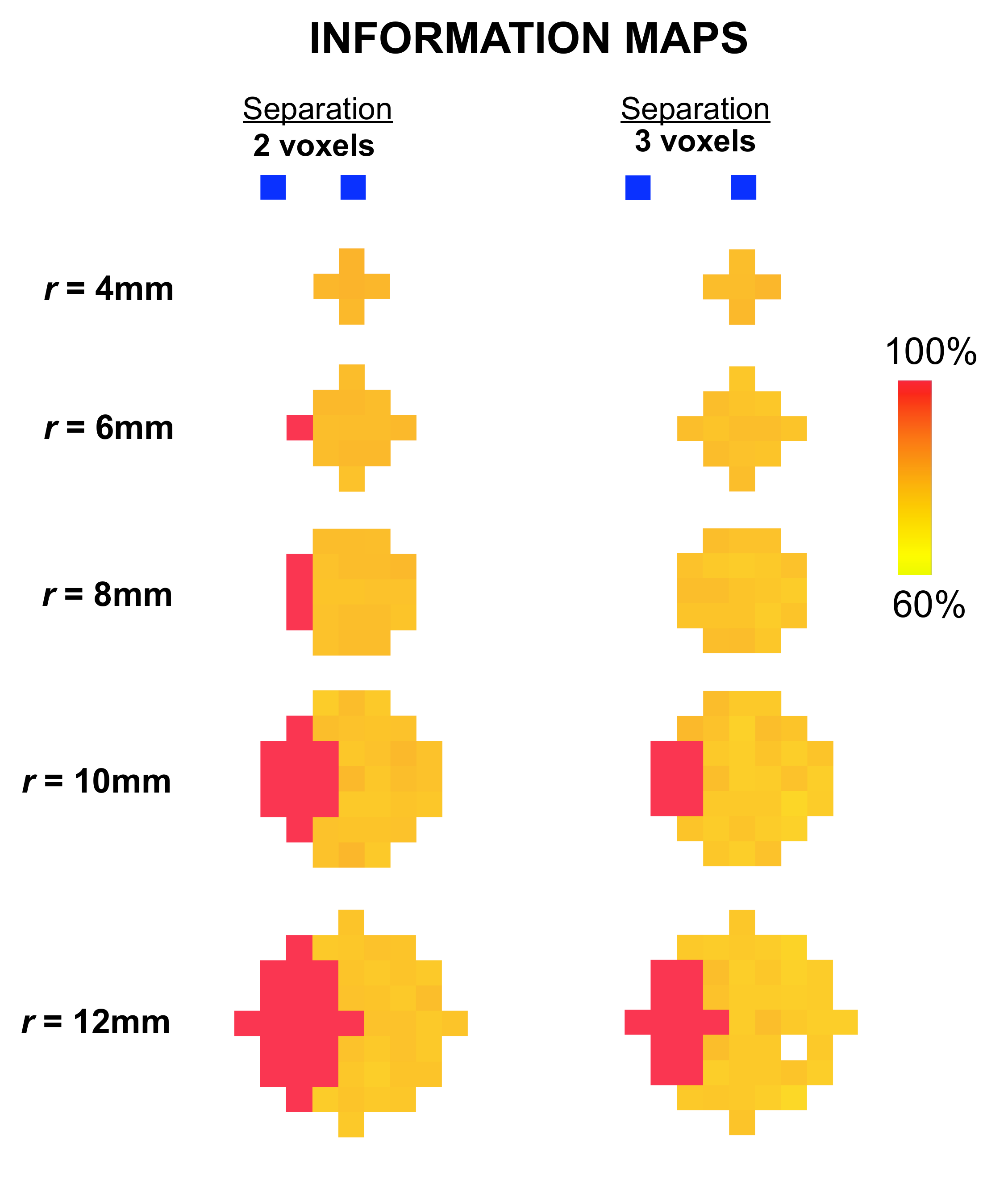}}\\
\subfigure[]{\includegraphics[width=0.49\textwidth]{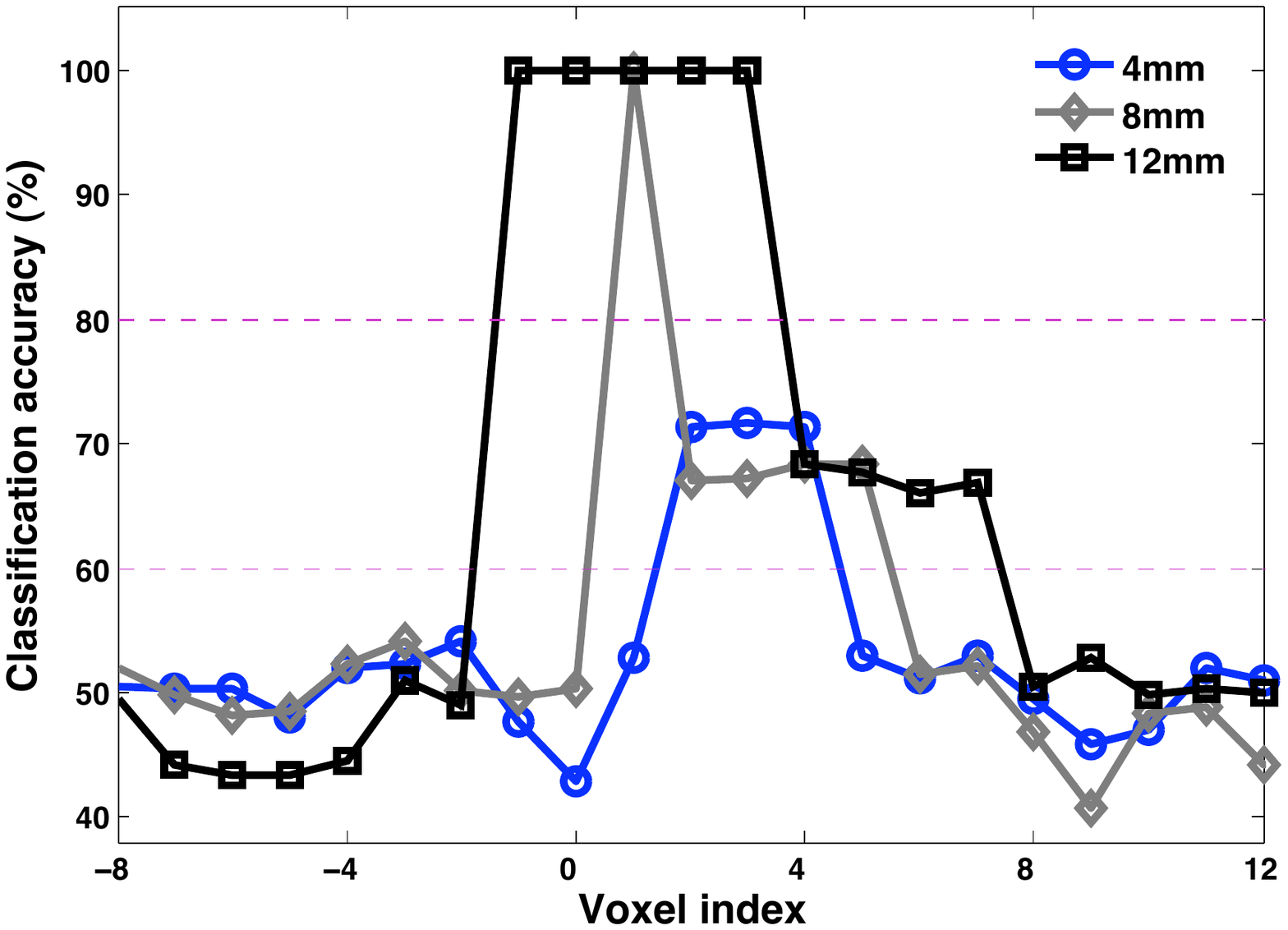}}
\subfigure[]{\includegraphics[width=0.49\textwidth]{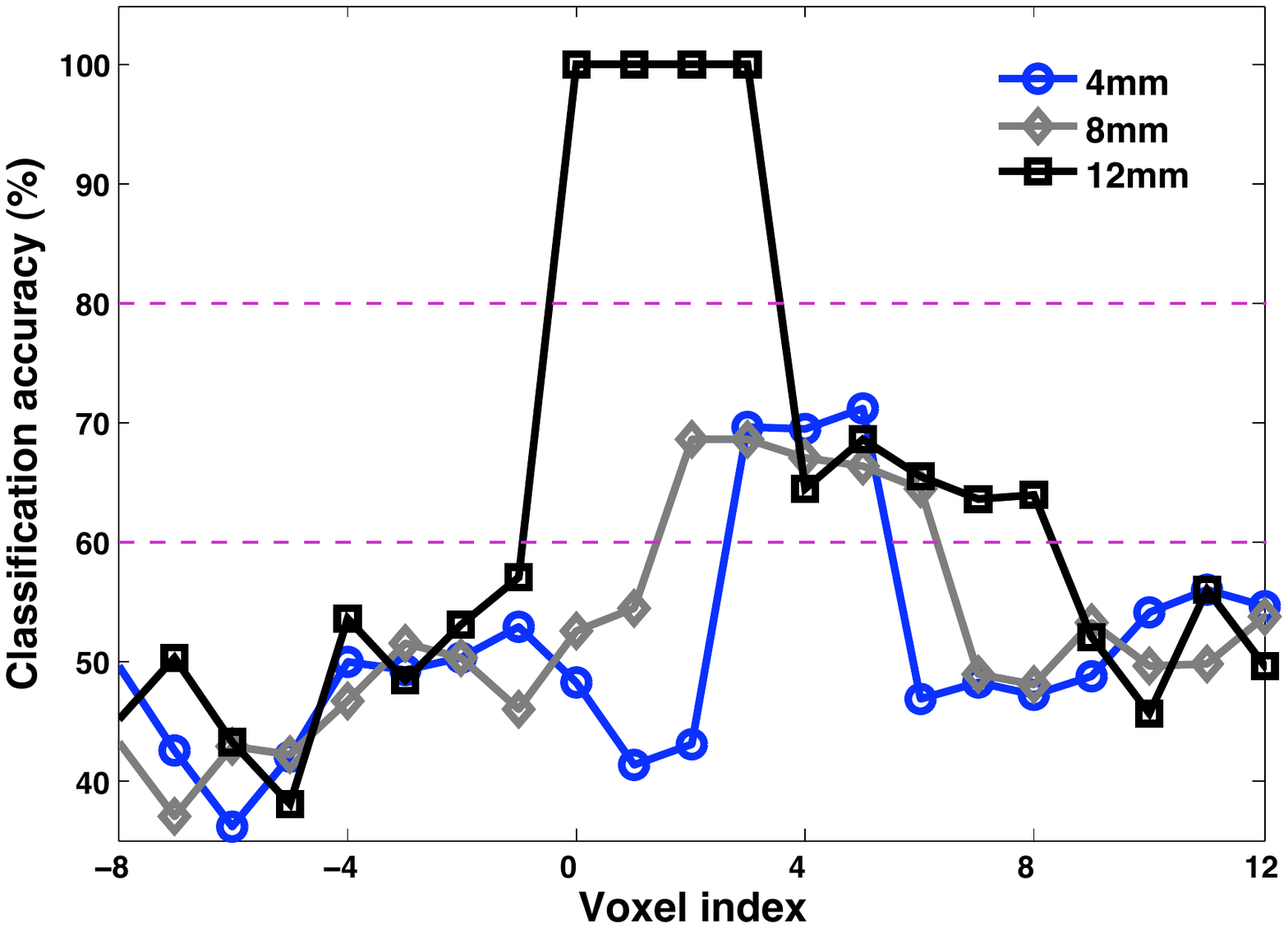}}
\caption{{{\bf Information maps for multivoxel response}: (a) Left and right panels show portions of information map (threshold = $60\%$ accuracy) centered around voxels $v_1$ and $v_2$ as a function of increasing searchlight radius (top to bottom); and the separation between $v_1$ and $v_2$ -- 2 voxel separation (left panel), and 3 voxel separation (right panel). Plots (b) and (c):  Cross-sectional profile of the information map showing classification accuracies for searchlights centered at voxels on a horizontal 1D slice through the voxels $v_1$ and $v_2$ for separation of 2 voxels (b), and 3 voxels (c). The lower dotted horizontal line indicates the thresholding value of $60\%$, and the upper dotted line indicates the regime of the high-accuracy searchlights ($80\%$).}}
\label{fig:haystackped}
\end{center}
\end{figure*}

The above simulations confirm the basic statistical premise motivating the searchlight-procedure, namely, the ability of a multivoxel pattern analysis method to detect distributed response patterns. However, for any radius, the size of the clusters produced by multivoxel response patterns are \underline{smaller} than those produced by single voxel response-differences.  Consistent with Theorem \ref{thm:monotone}, the number of informative searchlights identified increases in a monotonic manner with the radius of the searchlight.

\subsection{Whole-brain inflation maps}

The previous two simulations demonstrated signal-dependent effects caused by the sampling bias inherent in the searchlight decomposition. However, according to Theorem \ref{thm:monotone}, there should be a monotonic increase in the number of informative searchlights as a function of radius, irrespective of the actual distribution of task-relevant voxels/voxel-groups across the brain. This monotonic scaling of the size of the ``blobs" on the information map makes plausible a rather unusual scenario -- an information map where \emph{every} searchlight in the brain is deemed to be informative. 

This scenario was motivated by results recently reported by \citet{Poldrack:2009fm}. In that study, information maps were generated using searchlights of radius $4$ mm and $8$ mm. Rather remarkably, with a radius of $8$ mm, only one region in the information map (the bilateral dorsolateral prefrontal cortex) was found to be \emph{uninformative} while every other searchlight was informative. This whole-brain coverage was, however, not the case with the $4$ mm searchlights. Given the inflationary relationship between $F_r$ (the number of informative searchlights) and searchlight radius $r$ that established in the previous sections, curiosity asked: could an informative whole-brain arise (i.e., $F_r = |\mathcal{V}|$) by random chance with a suitably chosen searchlight radius?

This question can be formulated as a covering problem. Consider a finite 3D voxel space corresponding to one containing the brain,  approximated as a cubic volume of size $N_X \times N_Y \times N_Z$, where $N_i$ is the number of voxels along the principal direction $i$. Suppose there is a minimum covering set of searchlights $C_r \subset \mathcal{S}_r(\mathcal{V})$ such that every voxel in $\mathcal{V}$ is contained in some searchlight in $C_r$. Recall that a single-voxel signal can produce a cluster having $N_r$ voxels on the information map, due to Theorem \ref{cor:corsym}. If the central voxel of each of the searchlights in $C_r$ was informative, it would follow that searchlights centered at every voxel in every one of the searchlights in $C_r$ would also be informative. Since every voxel in $\mathcal{V}$ is present in some searchlight in $C_r$, it implies that a rather sparse distribution of informative single-voxels specified by $C_r$ could produce an information map where \emph{every} searchlight in $\mathcal{S}_r(\mathcal{V})$ would be informative (with the proviso that the SIN assumption holds true.)

\begin{figure}[htbp]
\begin{center}
\includegraphics[width=0.5 \textwidth]{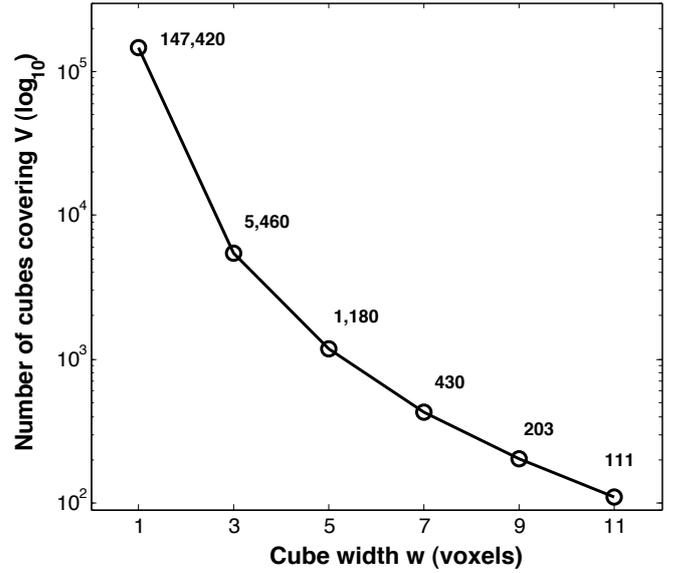}
\caption{{Number of non-intersecting cubes required to fill a volume of dimensions $N_x = 52, N_y = 63, N_z = 45$ ($N_i$ = number of voxels in principal direction $i$) as a function of cube-size. Values next to each data-point indicate the actual number of filling cubes for each cube-size.}}
\label{fig:fillingvol}
\end{center}
\end{figure}

The sparsity of these informative single-voxels can be readily approximated if we use cubical volumes as a proxy for the spherical shape of the searchlight volumes. A cube of side $w$ voxels would fully contain a sphere having radius $r = w/2$, and would be fully contained in a sphere of radius $w\sqrt3/2$. With this simplification, the minimal number of searchlight cubes required to cover the voxel space $\mathcal{V}$ is readily approximated as the volume of the voxel space divided by the volume of each searchlight cube, that is, $\approx \lceil (N_X N_Y N_Z)/(w^3) \rceil$. 

For voxels of size $3\times3\times3$ mm$^3$, we approximate the size of the voxel space with the following values $N_X = 52$ voxels, $N_Y = 63$ voxels and $N_Z = 45$ voxels. Figure \ref{fig:fillingvol} shows the minimum number of cubical volumes required to cover $\mathcal{V}$ as a function of $w$, where $w$ took values $1, 3, 5, 7, 9, 11$. 

A searchlight cube of side $w=1$ is equivalent to a single voxel so the size of the covering set $C_r$ is equal to the total number of voxels in $\mathcal{V}$, namely, $147420$. However, increasing values of $w$ produce a rapid decrease in the size of the covering set. For $w = 3$ voxels, a cubical volume that would fully contain a spherical searchlight of radius $4$ mm, a total of $5460$ equally spaced signal-carrying voxels can produce an information map where every searchlight is informative. However, for a cubical volume with side $w=7$ voxels, corresponding to spherical volumes of radius $8$mm, a mere $430$ voxels are required for such a fully informative map. Stated differently, an information map with a single task-relevant cluster made up of every voxel in $\mathcal{V}$ can be generated from a mere $430$ regularly spaced voxels of the $147,420$ voxels in $\mathcal{V}$, that is, $430$ voxels that enable the conditions to be distinguished whether due to the presence of true signal or by random chance. This potential for a small number of single voxels (i.e., $\approx 0.003\%$ of $|\mathcal{V}|$) to drive the structure of the entire information map simply by the choice of the searchlight radius presents an important consideration for drawing neurobiological interpretation.

\section{Discussion}

Knowledge of the actual information-carrying voxels in each informative searchlight would make the information map irrelevant. These actual informative voxels could be directly reported, hence resolving the over-counting that arises from their inclusion in multiple searchlights. One possible implementation would be to identify task-relevant voxels in each searchlight, and then combine these identified voxels across searchlights. However, requiring the identification of the actual informative voxels in each searchlight could reduce the generality of the searchlight method. When pattern classifiers are used to compute the searchlight statistic, each voxel (or feature) in a searchlight is typically assigned a weight, and the weighted combination of the multivoxel responses is used to make a classification decision. However, the specific basis for assigning weights to individual features is highly dependent on the specific machine learning algorithm and its inductive assumptions \citep{mitchell1980need,wolpert1996lack,guyon2002gene,Pereira:2009ba}. Consequently, appropriate techniques would be required to allow results to be compared across studies that use different MVPA-techniques. 

Until such advances are made, the analytical framework described above provides several constraints on alternate interpretations of the information map. Our results present a strong argument against measuring the sensitivity of information mapping by a count of the number of informative searchlights. The seemingly high sensitivity of the searchlight method as judged by such a performance measure in part has a rather trivial explanation. Specifically, an explanation in the obligatory geometric properties of the searchlight-method as discussed above rather than an explanation related to underlying neural organization, or the sophisticated machine-learning algorithms used to analyze multivoxel response patterns, or the widely discussed merits of multivariate statistical evaluations. Indeed, the upshot of the optimistic scaling of this performance measure is that it is maximal when explicitly assuming a highly sensitive and robust MVPA technique, namely one satisfying the superset informativeness (SIN) assumption.

\section{Acknowledgments}
This work was supported by the U.S. Army Research Office through the Institute for Collaborative Biotechnologies under Contract No. W911NF-09-D-0001.

\section{References}

\begin{thebibliography}{35}
\expandafter\ifx\csname natexlab\endcsname\relax\def\natexlab#1{#1}\fi
\expandafter\ifx\csname url\endcsname\relax
  \def\url#1{\texttt{#1}}\fi
\expandafter\ifx\csname urlprefix\endcsname\relax\def\urlprefix{URL }\fi

\bibitem[{Alink et~al.(2011)Alink, Euler, Kriegeskorte, Singer, and
  Kohler}]{Alink:2011sf}
Alink, A., Euler, F., Kriegeskorte, N., Singer, W., Kohler, A., 2011. Auditory
  motion direction encoding in auditory cortex and high-level visual cortex.
  Human Brain Mapping.

\bibitem[{Chadwick et~al.(2010)Chadwick, Hassabis, Weiskopf, and
  Maguire}]{Chadwick:2010uq}
Chadwick, M.~J., Hassabis, D., Weiskopf, N., Maguire, E.~A., 2010. Decoding
  individual episodic memory traces in the human hippocampus. Current Biology
  20~(6), 544--7.

\bibitem[{Chen et~al.(2011)Chen, Namburi, Elliott, Heinzle, Soon, Chee, and
  Haynes}]{Chen:2011vl}
Chen, Y., Namburi, P., Elliott, L.~T., Heinzle, J., Soon, C.~S., Chee, M.
  W.~L., Haynes, J.-D., 2011. Cortical surface-based searchlight decoding.
  Neuroimage 56~(2), 582--92.

\bibitem[{Connolly et~al.(2012)Connolly, Guntupalli, Gors, Hanke, Halchenko,
  Wu, Abdi, and Haxby}]{Connolly:2012gl}
Connolly, A.~C., Guntupalli, J.~S., Gors, J., Hanke, M., Halchenko, Y.~O., Wu,
  Y.-C., Abdi, H., Haxby, J.~V., 2012. The representation of biological classes
  in the human brain. Journal of Neuroscience 32~(8), 2608--18.

\bibitem[{Cox and Savoy(2003)}]{Cox:2003vc}
Cox, D.~D., Savoy, R.~L., Jun 2003. Functional magnetic resonance imaging
  (fmri) "brain reading": detecting and classifying distributed patterns of
  fmri activity in human visual cortex. Neuroimage 19~(2 Pt 1), 261--70.

\bibitem[{Formisano and Kriegeskorte(2012)}]{Formisano:2012fk}
Formisano, E., Kriegeskorte, N., 2012. Seeing patterns through the hemodynamic
  veil - the future of pattern-information f{MRI}. Neuroimage 62~(2), 1249--56.

\bibitem[{Golomb and Kanwisher(2011)}]{Golomb:2011ph}
Golomb, J.~D., Kanwisher, N., 2011. Higher level visual cortex represents
  retinotopic, not spatiotopic, object location. Cerebral Cortex Epub.

\bibitem[{Guyon et~al.(2002)Guyon, Weston, Barnhill, and
  Vapnik}]{guyon2002gene}
Guyon, I., Weston, J., Barnhill, S., Vapnik, V., 2002. Gene selection for
  cancer classification using support vector machines. Machine learning 46~(1),
  389--422.

\bibitem[{Hanke et~al.(2009)Hanke, Halchenko, Sederberg, Hanson, Haxby, and
  Pollmann}]{Hanke:2009fk}
Hanke, M., Halchenko, Y.~O., Sederberg, P.~B., Hanson, S.~J., Haxby, J.~V.,
  Pollmann, S., 2009. Py{MVPA}: A python toolbox for multivariate pattern
  analysis of f{MRI} data. Neuroinformatics 7~(1), 37--53.

\bibitem[{Haxby et~al.(2001)Haxby, Gobbini, Furey, Ishai, Schouten, and
  Pietrini}]{Haxby:2001kl}
Haxby, J.~V., Gobbini, M.~I., Furey, M.~L., Ishai, A., Schouten, J.~L.,
  Pietrini, P., 2001. Distributed and overlapping representations of faces and
  objects in ventral temporal cortex. Science 293~(5539), 2425--30.

\bibitem[{Haynes(2006)}]{Haynes:2006vv}
Haynes, J., 2006. {Decoding mental states from brain activity in humans}.
  Nature Reviews Neuroscience 7~(7), 523--534.

\bibitem[{Haynes et~al.(2007)Haynes, Sakai, Rees, Gilbert, Frith, and
  Passingham}]{Haynes:2007dy}
Haynes, J.-D., Sakai, K., Rees, G., Gilbert, S., Frith, C., Passingham, R.~E.,
  Feb. 2007. {Reading Hidden Intentions in the Human Brain}. Current Biology
  17~(4), 323--328.

\bibitem[{Jimura and Poldrack(2012)}]{Jimura:2012cs}
Jimura, K., Poldrack, R.~A., 2012. Analyses of regional-average activation and
  multivoxel pattern information tell complementary stories. Neuropsychologia
  50~(4), 544--52.

\bibitem[{Johnson et~al.(2009)Johnson, McDuff, Rugg, and
  Norman}]{Johnson:2009kx}
Johnson, J.~D., McDuff, S. G.~R., Rugg, M.~D., Norman, K.~A., 2009.
  Recollection, familiarity, and cortical reinstatement: a multivoxel pattern
  analysis. Neuron 63~(5), 697--708.

\bibitem[{Kaplan and Meyer(2012)}]{Kaplan:2012gd}
Kaplan, J.~T., Meyer, K., 2012. Multivariate pattern analysis reveals common
  neural patterns across individuals during touch observation. Neuroimage
  60~(1), 204--12.

\bibitem[{Kriegeskorte et~al.(2006)Kriegeskorte, Goebel, and
  Bandettini}]{Kriegeskorte:2006bf}
Kriegeskorte, N., Goebel, R., Bandettini, P., 2006. Information-based
  functional brain mapping. Proceedings of the National Academy of Sciences
  103~(10), 3863--8.

\bibitem[{Mitchell(1980)}]{mitchell1980need}
Mitchell, T., 1980. The need for biases in learning generalizations. Tech. Rep.
  CBM-TR-5-110, Department of Computer Science, Rutgers University.

\bibitem[{Morgan et~al.(2011)Morgan, Macevoy, Aguirre, and
  Epstein}]{Morgan:2011ys}
Morgan, L.~K., Macevoy, S.~P., Aguirre, G.~K., Epstein, R.~A., 2011. Distances
  between real-world locations are represented in the human hippocampus.
  Journal of Neuroscience 31~(4), 1238--45.

\bibitem[{Mur et~al.(2008)Mur, Bandettini, and Kriegeskorte}]{Mur:2008cn}
Mur, M., Bandettini, P.~A., Kriegeskorte, N., 2008. {Revealing representational
  content with pattern-information f{MRI}--an introductory guide}. Social
  Cognitive and Affective Neuroscience 4~(1), 101--109.

\bibitem[{Nestor et~al.(2011)Nestor, Plaut, and Behrmann}]{Nestor:2011pf}
Nestor, A., Plaut, D.~C., Behrmann, M., 2011. Unraveling the distributed neural
  code of facial identity through spatiotemporal pattern analysis. Proceedings
  of the National Academy of Sciences 108~(24), 9998--10003.

\bibitem[{Norman et~al.(2006)Norman, Polyn, Detre, and Haxby}]{Norman:2006gy}
Norman, K.~A., Polyn, S.~M., Detre, G.~J., Haxby, J.~V., Sep. 2006. {Beyond
  mind-reading: multi-voxel pattern analysis of fMRI data}. Trends in Cognitive
  Sciences 10~(9), 424--430.

\bibitem[{Oosterhof et~al.(2012)Oosterhof, Tipper, and
  Downing}]{Oosterhof:2012vb}
Oosterhof, N.~N., Tipper, S.~P., Downing, P.~E., Jan 2012. Viewpoint
  (in)dependence of action representations: An {MVPA} study. J Cogn Neurosci.

\bibitem[{Oosterhof et~al.(2011)Oosterhof, Wiestler, Downing, and
  Diedrichsen}]{Oosterhof:2011ad}
Oosterhof, N.~N., Wiestler, T., Downing, P.~E., Diedrichsen, J., May 2011. A
  comparison of volume-based and surface-based multi-voxel pattern analysis.
  Neuroimage 56~(2), 593--600.

\bibitem[{Oosterhof et~al.(2010)Oosterhof, Wiggett, Diedrichsen, Tipper, and
  Downing}]{Oosterhof:2010jq}
Oosterhof, N.~N., Wiggett, A.~J., Diedrichsen, J., Tipper, S.~P., Downing,
  P.~E., Aug 2010. Surface-based information mapping reveals crossmodal
  vision-action representations in human parietal and occipitotemporal cortex.
  Journal of Neurophysiology 104~(2), 1077--89.

\bibitem[{Peelen and Kastner(2011)}]{Peelen:2011jt}
Peelen, M.~V., Kastner, S., Jul 2011. A neural basis for real-world visual
  search in human occipitotemporal cortex. Proceedings of the National Academy
  of Sciences 108~(29), 12125--30.

\bibitem[{Pereira and Botvinick(2011)}]{Pereira:2011bl}
Pereira, F., Botvinick, M., May 2011. {Information mapping with pattern
  classifiers: A comparative study}. NeuroImage 56~(2), 476--496.

\bibitem[{Pereira et~al.(2009)Pereira, Mitchell, and
  Botvinick}]{Pereira:2009ba}
Pereira, F., Mitchell, T., Botvinick, M., Mar. 2009. {Machine learning
  classifiers and fMRI: A tutorial overview}. NeuroImage 45~(1), S199--S209.

\bibitem[{Poldrack et~al.(2009)Poldrack, Halchenko, and
  Hanson}]{Poldrack:2009fm}
Poldrack, R.~A., Halchenko, Y.~O., Hanson, S.~J., Nov. 2009. {Decoding the
  Large-Scale Structure of Brain Function by Classifying Mental States Across
  Individuals}. Psychological Science 20~(11), 1364--1372.

\bibitem[{Serences and Saproo(2012)}]{Serences:2012fk}
Serences, J.~T., Saproo, S., Mar 2012. Computational advances towards linking
  bold and behavior. Neuropsychologia 50~(4), 435--46.

\bibitem[{Soon et~al.(2008)Soon, Brass, Heinze, and Haynes}]{Soon:2008bl}
Soon, C.~S., Brass, M., Heinze, H.-J., Haynes, J.-D., Apr. 2008. {Unconscious
  determinants of free decisions in the human brain}. Nature Neuroscience
  11~(5), 543--545.

\bibitem[{Stokes et~al.(2011)Stokes, Saraiva, Rohenkohl, and
  Nobre}]{Stokes:2011ly}
Stokes, M., Saraiva, A., Rohenkohl, G., Nobre, A.~C., Jun 2011. Imagery for
  shapes activates position-invariant representations in human visual cortex.
  Neuroimage 56~(3), 1540--5.

\bibitem[{Tong(2010)}]{Tong2010}
Tong, F., Dec. 2010. {Pattern Classification Analysis}. Annual Review of
  Psychology 63~(1), 110301102248092.

\bibitem[{Wagner and Rissman(2010)}]{Wagner2010}
Wagner, A.~D., Rissman, J., Dec. 2010. {Distributed representations in memory:
  insights from functional brain imaging}. Annual Review of Psychology 63~(1),
  110301102248092.

\bibitem[{Wolpert(1996)}]{wolpert1996lack}
Wolpert, D., 1996. The lack of a priori distinctions between learning
  algorithms. Neural computation 8~(7), 1341--1390.

\bibitem[{Woolgar et~al.(2011)Woolgar, Thompson, Bor, and
  Duncan}]{Woolgar:2011kx}
Woolgar, A., Thompson, R., Bor, D., Duncan, J., May 2011. Multi-voxel coding of
  stimuli, rules, and responses in human frontoparietal cortex. Neuroimage
  56~(2), 744--52.

\end{thebibliography}

\end{document}